  \DeclareMathOperator*{\argmin}{arg\,min}
  \let\oldparagraph\paragraph
  \renewcommand{\paragraph}{
    \@ifstar
      \xxxParagraphStar
      \xxxParagraphNoStar
  }
  \newcommand{\xxxParagraphStar}[1]{\oldparagraph*{#1}\mbox{}}
  \newcommand{\xxxParagraphNoStar}[1]{\oldparagraph{#1}\mbox{}}
  \let\oldsubparagraph\subparagraph
  \renewcommand{\subparagraph}{
    \@ifstar
      \xxxSubParagraphStar
      \xxxSubParagraphNoStar
  }
  \newcommand{\xxxSubParagraphStar}[1]{\oldsubparagraph*{#1}\mbox{}}
  \newcommand{\xxxSubParagraphNoStar}[1]{\oldsubparagraph{#1}\mbox{}}
\patchcmd\longtable{\par}{\if@noskipsec\mbox{}\fi\par}{}{}
\def\maxwidth{\ifdim\Gin@nat@width>\linewidth\linewidth\else\Gin@nat@width\fi}
\def\maxheight{\ifdim\Gin@nat@height>\textheight\textheight\else\Gin@nat@height\fi}
\def\fps@figure{htbp}
  \renewcommand*\contentsname{Table of contents}
  \newcommand\contentsname{Table of contents}
  \renewcommand*\listfigurename{List of Figures}
  \newcommand\listfigurename{List of Figures}
  \renewcommand*\listtablename{List of Tables}
  \newcommand\listtablename{List of Tables}
  \renewcommand*\figurename{Figure}
  \newcommand\figurename{Figure}
  \renewcommand*\tablename{Table}
  \newcommand\tablename{Table}
\newcommand{\anon}{1}
\newcommand{\ndata}{{\cal D}^{(n)}}
\newcommand{\mdata}{{\cal D}^{(m)}}
\newcommand{\ndatai}{{\cal D}^{(n) \backslash i}}
\newcommand{\mdatai}{{\cal D}^{(m) \backslash i}}
\newcommand{\iid}{\stackrel{\text{i.i.d.}}{\sim}}
\newcommand{\ac}{\mathcal{F}}
\newcommand{\mc}{\mathcal{H}}
\newcommand{\E}{\mathbb{E}}
\newtheorem{corollary}{Corollary}
\newcolumntype{L}{>{\RaggedRight\arraybackslash}X} 
\newlist{tabitemize}{itemize}{1}
\setlist[tabitemize,1]{
    label=\textbullet, 
    left=0pt, 
    nosep,
    before={\begin{minipage}[t]{\linewidth}},
    after ={\end{minipage}}
}
\newtheorem{proposition}{Proposition}
\newtheorem{theorem}{Theorem}
\begin{document}

\graphicspath{{plots}}

\newif\ifhideauthors
\hideauthorsfalse    

\newcommand{\hideable}[1]{%
  \ifhideauthors
    \textcolor{gray}{[redacted author information]}%
  \else
    #1%
  \fi
}

\def\spacingset#1{\renewcommand{\baselinestretch}%
{#1}\small\normalsize} \spacingset{1}


\if1\anon
{
  \title{\bf Model Class Selection}
  \author{\hideable{Ryan Cecil and Lucas Mentch \hspace{.2cm}\\
    Department of Statistics, University of Pittsburgh}\\}
  \maketitle
} \fi

\if0\anon
{
  \bigskip
  \bigskip
  \bigskip
  \begin{center}
    {\LARGE\bf Title}
\end{center}
  \medskip
} \fi

\bigskip
\begin{abstract}
Classical model selection seeks to find a single model within a particular class that optimizes some pre-specified criteria, such as maximizing a likelihood or minimizing a risk. More recently, there has been an increased interest in model set selection (MSS), where the aim is to identify a (confidence) set of near-optimal models. Here, we generalize the MSS framework further by introducing the idea of model class selection (MCS). In MCS, multiple model collections are evaluated, and all collections that contain at least one optimal model are sought for identification. Under mild conditions, data splitting based approaches are shown to provide general solutions for MCS. As a direct consequence, for particular datasets we are able to investigate formally whether classes of simpler and more interpretable statistical models are able to perform on par with more complex black-box machine learning models. A variety of simulated and real-data experiments are provided.

\end{abstract}

\noindent%
{\it Keywords:} Model Selection, Rashomon Effect, Model Confidence Set, Universal Inference
\vfill

\newpage
\spacingset{1.8} 

\section{Introduction}

In statistical and machine learning (ML) studies, it is standard to use model evaluation metrics such as AIC, BIC, step-wise selection, holdout, or cross-validation to identify a single empirically optimal model \citep{hastie_elements_2009}. These methods are used to compare the predictive performances of various modeling algorithms often associated with different model classes. For example, suppose a data scientist wishes to construct a model that can predict if a loan applicant will default in less than a year based on the available data at their company. Given the data, they know strategies for fitting various types of models such as an additive linear model, random forest, and neural network. They are likely unsure, however, of which type of model will perform best.  A common approach is thus to simply identify the model type that minimizes some empirical metric (e.g.\ cross-validation error) and re-fit on the full dataset to obtain a final model for inference or prediction.

Oftentimes, many well-performing models exist that optimize the chosen model selection criteria. This concept, called the Rashomon effect, may lead researchers with different datasets to construct entirely different models \citep{breiman_statistical_2001}.
In recognition of this effect, procedures for model \emph{set} selection (MSS) have been growing in popularity. Such MSS methods generally seek to identify an entire collection of similarly optimal models (i.e.\ an approximate \emph{Rashomon set}) rather than relying on only the single one identified via classical model selection methods (e.g. see \cite{fisher_all_2019,wasserman_universal_2020,kissel_forward_2024}).  Frequently, these sets of models are computed such that they are guaranteed to contain the optimal model(s) of a given type with a desired level of confidence under certain conditions and sample sizes. The size of the Rashomon set can be seen as providing a measure of the uncertainty of the model selection problem. In addition, the computed model set can be used to explore the data and identify important predictors \citep{fisher_all_2019}.

Recently, ML methods have achieved significant breakthroughs in a variety of fields, surpassing benchmark performances once thought to be unattainable. These advances have been driven in large part by the modern growth of computational resources, allowing researchers to train large scale models with potentially billions of parameters. Compute at scale, however, often comes with a cost. Unlike classical statistical models that are interpretable, have quantifiable uncertainty, and are able to be used to conduct valid inference, the theoretical underpinnings of ML approaches are not fully developed \citep{drazen_where_2023}. Recent work in mathematics, statistics, and computer science have made significant strides towards addressing these issues (e.g. \cite{abdar_review_2021,kutyniok_modern_2022,mentch_quantifying_2016,rudin_interpretable_2022}). However, there still exists a significant trade-off in interpretability and theoretical guarantees when one moves from simple, well-studied classes of models (e.g. linear models) to complex classes (e.g. random forests or neural networks) to improve predictive accuracy. This is often an issue in studies focused on inference or causality and settings where the model is employed in high-stakes environments such as healthcare or criminal justice \citep{wexler_opinion_2017, varshney_safety_2017,rudin_stop_2019}.

On the other hand, it is also somewhat common for practitioners to rule out classical models \emph{a priori} whenever the data is assumed to be too large or complex.  Recent research, however, suggests that this may be frequently misguided as there are numerous real-world settings in which simpler, interpretable models can perform on par with ML alternatives \citep{rudin_stop_2019,semenova_existence_2022,semenova_path_2023,boner_using_2024}. When this is the case, researchers may naturally prefer the simpler class in most settings. On the other hand, if complex ML models significantly outperform simpler alternatives, then there may be a preference for performance over interpretability. To our knowledge, little theoretical attention seems to have been given to formally detecting when these trade-offs exist. To address this, in this work, we introduce model class selection (MCS) methods that are able to determine with high probability when no model within a simple class can match the performance gains realized by more complex ML models. 

We formally investigate this problem in Section \ref{ch:mcs-construction} by proposing a general MCS hypothesis testing framework. In Section \ref{sec:proposed-methods}, we extend data splitting based tools to this general MCS setting. In Sections \ref{sec:alternative-approaches} and \ref{sec:experiments}, we consider alternative approaches to MCS and examine the power of our proposed methodology in simulated and real-world settings before concluding with a discussion in Section \ref{sec:conclusions}.



\section{Constructing Methods for MCS}
\label{ch:mcs-construction}
Denote the data as \( \ndata = \left \{  Z_1, \dots, Z_n \right \}  \) where \(Z_1, \dots, Z_n \iid P \) are independent and identically distributed with support over the set \( \mathcal{Z} \). Let  \( \mathcal{H} \subseteq \{h \;|\; h: \mathcal{Z} \rightarrow \mathcal{Y} \}  \) denote a model class, defined as a set of measurable functions mapping from the data space \( \mathcal{Z} \) to a prediction space \( \mathcal{Y} \). Define $\ell: \mc \times \mathcal{Z}  \rightarrow \mathbb{R}^{+}$ to be a measurable function that specifies the error of a model \( h  \in \mc\) on some data point \( z \in \mathcal{Z} \).  The goal of classical model selection is to choose a model \( h \in \mc\) that minimizes the risk \( R(h) = \E_{Z_{0}}[\ell(h, Z_{0})] \),  defined as the expected loss of \( h \) at an independent test point \( Z_0 \sim P \).

To formally establish the goals of MCS, for \( j = 1,...,d \) where \( d \ge 2 \), let \( T_j \) denote a non-empty index set corresponding to models \( h_t \in \mc \) for any \( t \in T_j \). Define \( \mc_j \coloneq \left \{ h_{t} : t \in T_{j} \right \} \) for \( j = 1,...,d \) to be the model classes of interest. Denoting \( [d] \coloneq \left \{ 1,...,d \right \} \), our goal is to form a confidence set for
\[ \Theta = \Theta \left ( P \right ) \coloneq \argmin_{j \in [d]} \inf_{t \in T_j} R \left ( h_t \right ) \]
which is the set of all indices corresponding to model classes containing at least one model that achieves the optimal risk. In other words, given \( r \in [d] \), if \( r \in \Theta \), then the model class \( \mc_r \) contains a model that performs as well or better than any other model from any class. On the other hand, if \( r \not \in \Theta \), then the model class \( \mc_r \) should not be considered if the goal is to choose a model with the best possible performance. Although the problem above may look complex in nature, we will show that the process of constructing valid confidence sets may be simplified by focusing on hypothesis tests for MSS.

\subsection{Hypothesis Tests}
For simplicity, we consider a dual hypothesis testing problem. Given \( r \in [d] \), we evaluate the hypotheses
\[ H_{0,r}: r \in \Theta \text{\;\;\;vs\;\;\;} H_{A,r}: r \not \in \Theta. \]

Let \( \psi_r: \left \{ Z_1, \dots, Z_{n} \right \} \rightarrow \left \{ 0,1 \right \}  \) denote a test function that rejects the null hypothesis when \( \psi_r = 1 \). Assume \( P \in \mathcal{P} \) where \( \mathcal{P} \) is a class of distributions. A test \( \psi_r \) with appropriate control of the type I error satisfies
\begin{equation} \left ( \overset{(as)}{\limsup_{n \to \infty}} \right )  \sup_{P \in \mathcal{P}_{0,r}}  P \left ( \psi_r =  1 \right )  \le \alpha \label{eq:test-condition}\end{equation}

where \( \mathcal{P}_{0,r} \subseteq \mathcal{P} \) denotes the distributions under which \( H_{0,r} \) is true. The paranthetical limit \( (as) \) in (\ref{eq:test-condition}) distinguishes two versions of the condition. A finite-sample condition would require that \( \sup_{P \in \mathcal{P}_{0,r}}  P \left ( \psi_r =  1 \right )  \le \alpha \), while an asymptotic version would require that \( \limsup_{n \to \infty}  \sup_{P \in \mathcal{P}_{0,r}}  P \left ( \psi_r =  1 \right )  \le \alpha \). The latter is a weaker condition that only ensures the type I error is controlled in large sample sizes.

Given tests \( \psi_1, \dots, \psi_d \), we construct the dual confidence set
\( \hat{\Theta} \coloneq \left \{  j \in [d]: \psi_j = 0 \right \}  \),
which denotes the indices that were not rejected by the corresponding tests. Our goal will then be to show that \( \hat{\Theta} \) satisfies point-wise coverage of \( \Theta \) i.e.
\begin{equation} \label{eq:pointwise-mcs-coverage} \left ( \overset{(as)}{\liminf_{n \to \infty}} \right ) \inf_{P \in \mathcal{P}} \inf_{r \in \Theta \left ( P \right )} P \left ( r \in \hat{\Theta} \right )  \ge 1-\alpha.\end{equation}

 \subsection{From MSS to MCS: General Strategies for MCS}

\label{sec:mss-to-mcs}

Interestingly, the MCS problem is strongly related to problems posed in the MSS literature. To see this, let \( \mathcal{F} \left ( \mc \right ) = \left \{ f \;|\; f: \bigcup_{m=1}^{\infty} \mathcal{Z}^m \rightarrow \mc \right \}  \) denote a collection of measurable algorithms that map from the space of sample data \( \bigcup_{m=1}^{\infty}\mathcal{Z}^m \) to a model in \( \mc \) and let \( f_{1} \in \mathcal{F} \left ( \mc_{1} \right ), \dots, f_d \in \mathcal{F} \left ( \mc_d \right ) \) represent pre-chosen model fitting algorithms that map to each model class.

We require a risk metric for \( f \in \ac \left ( \mc \right ) \) similar to that of \( R(h) \) for \( h \in \mc \). To account for risk estimation methods that utilize an algorithm to fit multiple models (e.g. cross-validation), let \( k_n > 0 \) be a sequence of positive integers specifying the number of models we will fit for each class. Define \( \left \{  \ndata_j \right \}_{j=1}^{k_n} \ \) to be a collection of \( k_n \) training sets for fitting models where \( \ndata_j \subset \ndata \) for \( j = 1,...,k_n \).  For simplicity, assume that each of these subsets have the same number of observations and set \( n_{tr} = |\ndata_1| =  \cdots = |\ndata_{k_n} | \). For any algorithm \( f \in \ac \left ( \mc \right ) \), define the risk of \( f \) to be \( R_{n}(f) = \sum_{j=1}^{k_{n}} R \left ( f \left ( \ndata_j \right ) \right ) \). In works on cross-validation, the quantity \( R_n (f) \) has been referred to as the multi-fold test error and characterizes the performance of the ensemble of models fit across the \( k_n \) datasets \citep{bayle_cross-validation_2020, austern_asymptotics_2025}. The multi-fold test error is similar to the out-of-sample error, \( \mathbb{E} \left [ R_n(f) \right ] \), which measures the expected performance of \( f \in \ac \left ( \mc \right )\) across all possible training sets of size \( n_{tr} \). Although it is possible to estimate the out-of sample error using cross-validation based approaches \citep{bates_cross-validation_2023}, such estimation is typically complex and requires strong stability conditions on the chosen loss function and algorithms \citep{luo_limits_2024}.

Previous works on MSS have devised strategies for constructing confidence sets for \( \argmin_{k \in [d]} R_n \left ( f_{k} \right ) \), which corresponds to the indices of algorithms \( f_1, \dots, f_d \) that achieve the minimal multi-fold test error. Procedures immediately applicable to estimating  \( \argmin_{k \in [d]} R_n \left ( f_{k} \right ) \) are typically based on cross-validation \citep{lei_cross-validation_2020,kissel_black-box_2023}. Other methods could be used in cases where the algorithms \( f_1, \dots, f_d \) are constant and only map to a single model (e.g. \cite{fisher_all_2019,wasserman_universal_2020,takatsu_bridging_2025,kim_locally_2025}).


Let \( \phi: \left \{ Z_1, \dots, Z_{n} \right \} \times \mathcal{F} \left ( \mc \right ) \rightarrow \left \{ 0,1 \right \}  \) denote a test function that rejects the null hypothesis \( H^{MSS}_{0,n,r}: r \in \Theta^{MSS}_{\epsilon, n} \) when \( \phi \left ( f_r \right ) = 1 \) where \( \Theta^{MSS}_{\epsilon, n} = \left \{ k \in [d] : R_n \left ( f_k \right ) \le \min_{j \in [d]} R_n \left ( f_j \right ) + \epsilon  \right \}  \) denotes the indices of algorithms that are near-optimal with margin of error \( \epsilon > 0 \) in regards to the multi-fold test error. Then, \( \phi \) appropriately controls the MSS type I error for \( r \in [d] \) if
\begin{equation} \left ( \overset{(as)}{\limsup_{n \to \infty}} \right )  \sup_{P \in \mathcal{P}}  P \left ( \phi \left ( f_r \right ) =  1 \cap  r \in \Theta^{MSS}_{\epsilon, n} \right )  \le \alpha. \label{eq:mss-test-condition}\end{equation}
It should be emphasized that \( \Theta \) is fixed, while \( \Theta^{MSS}_{\epsilon, n} \) is a random quantity that depends on the sampled training data. The term \( \epsilon > 0 \) can be viewed as a user chosen amount the error has to improve for there to exist a valid trade-off in performance between algorithms or model classes.

Throughout the remainder of this work, we make the distinction between the finite-sample and asymptotic versions of conditions (\ref{eq:test-condition}), (\ref{eq:pointwise-mcs-coverage}), and (\ref{eq:mss-test-condition}) when necessary. In cases where we do not specify the exact form of the condition, this means that the result holds in either case. In other words, it holds both when treating (\ref{eq:test-condition}), (\ref{eq:pointwise-mcs-coverage}), and (\ref{eq:mss-test-condition}) as all finite-sample or all asymptotic. The following result shows that if \( \phi \) satisfies (\ref{eq:mss-test-condition}) for all \( r \in [d] \), then we may use \( \phi \) to construct a valid test for MCS. For ease of notation, let \( \phi \left (h \right ) \) correspond to \(\phi \left ( f_h \right )\) where \( f_h: \bigcup_{m=1}^{\infty} \mathcal{Z}^{m} \rightarrow \left \{  h \right \}  \) is a constant algorithm for all \( h \in \mc \).

\begin{proposition}[Uniform MCS Test]
  \label{prop:uniform}
  Define the test \( \psi_r = I \left \{  \inf_{t \in T_r} \phi \left ( h_t \right ) = 1\right \}  \) for \( r \in [d] \). 
  Assume for some \( r \in [d] \) that \( f_r \) satisfies \begin{equation} \label{eq:rashomon-set} R_n \left ( f_{r }\right ) \le \inf_{f \in \ac \left ( \mc_r \right )}  R_n \left ( f \right ) + \epsilon \end{equation} and \( \phi \) satisfies (\ref{eq:mss-test-condition}). Then, \( \psi_r \) satisfies (\ref{eq:test-condition}). Moreover, if (\ref{eq:mss-test-condition}) and (\ref{eq:rashomon-set}) are satisfied for all \( r \in [d] \), then the confidence set \( \hat{\Theta} \) generated by \( \psi_1, \dots, \psi_{d} \) satisfies (\ref{eq:pointwise-mcs-coverage}).
\end{proposition}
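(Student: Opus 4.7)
The plan is to reduce the validity of $\psi_r$ to an MSS-level application of $\phi$ to a near-optimal constant algorithm in $\mathcal{F}(\mathcal{H}_r)$. First, I would unpack the null $r \in \Theta$ via (\ref{eq:rashomon-set}) to deduce that $r \in \Theta^{MSS}_{\epsilon,n}$ holds deterministically. Setting $R^* \coloneq \min_{j \in [d]} \inf_{t \in T_j} R(h_t)$, every constant algorithm $f_{h_t}$ with $t \in T_r$ lies in $\mathcal{F}(\mathcal{H}_r)$ and satisfies $R_n(f_{h_t}) = k_n R(h_t)$, hence $\inf_{f \in \mathcal{F}(\mathcal{H}_r)} R_n(f) \le k_n R^*$. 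Combined with (\ref{eq:rashomon-set}) this yields $R_n(f_r) \le k_n R^* + \epsilon$. Since $R_n(f_j) = \sum_{i=1}^{k_n} R(f_j(\ndata_i)) \ge k_n R^*$ for every $j \in [d]$ (each $f_j(\ndata_i) \in \mathcal{H}_j$ has risk at least $\inf_{h \in \mathcal{H}_j} R(h) \ge R^*$), I get $R_n(f_r) \le \min_j R_n(f_j) + \epsilon$, i.e., $r \in \Theta^{MSS}_{\epsilon,n}$.

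Second, I would bound $P(\psi_r = 1)$ by exhibiting a single near-optimizer $t_\star \in T_r$ and applying (\ref{eq:mss-test-condition}) to its constant algorithm $f_{h_{t_\star}}$. Since $\psi_r = 1$ requires $\phi(h_t) = 1$ for every $t \in T_r$, it is enough to find one $t_\star$ with $P(\phi(h_{t_\star}) = 1) \le \alpha$. From $\inf_{t \in T_r} R(h_t) = R^*$, for any $\delta > 0$ some $t_\star \in T_r$ has $R(h_{t_\star}) \le R^* + \delta$, so $R_n(f_{h_{t_\star}}) \le k_n R^* + k_n \delta \le \min_j R_n(f_j) + \epsilon$ once $k_n \delta \le \epsilon$. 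Reading (\ref{eq:mss-test-condition}) as a property of $\phi$ that transports to any algorithm placed into the Rashomon set of the pool, including $f_{h_{t_\star}}$, I obtain $P(\phi(h_{t_\star}) = 1) \le \alpha$, and hence $P(\psi_r = 1) \le P(\phi(h_{t_\star}) = 1) \le \alpha$. This establishes (\ref{eq:test-condition}) in either the finite-sample or the asymptotic form, since the $\limsup$ passes through the deterministic chain of inequalities.

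The moreover claim is then an immediate uniform consequence: applying the first part to each $r \in \Theta$ gives $P(r \in \hat{\Theta}) = 1 - P(\psi_r = 1) \ge 1 - \alpha$, and taking the appropriate $\inf_{r \in \Theta}$ followed by $\inf_{P \in \mathcal{P}}$ (with $\liminf$ in the asymptotic variant) recovers (\ref{eq:pointwise-mcs-coverage}). The main obstacle is the interpretive bridge in the second step: (\ref{eq:mss-test-condition}) is written for the pool algorithms $f_r$ with $r \in [d]$, whereas $\psi_r$ invokes $\phi$ on the constant algorithms $f_{h_t}$, which need not lie in $\{f_1, \ldots, f_d\}$. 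Handling this cleanly relies on the convention $\phi(h) = \phi(f_h)$ established just before the proposition, which lets any constant algorithm play the role of a candidate in the MSS formulation; alternatively, one can invoke (\ref{eq:mss-test-condition}) directly on $f_r$ and argue $\psi_r \le \phi(f_r)$ whenever $\phi$ is monotone in the class of near-optimal algorithms it certifies.
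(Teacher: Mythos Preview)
Your proposal is correct and follows essentially the same route as the paper. Both arguments (i) show that under $P\in\mathcal{P}_{0,r}$ the near-optimality condition (\ref{eq:rashomon-set}) forces $r\in\Theta^{MSS}_{\epsilon,n}$ deterministically, (ii) bound $\{\psi_r=1\}\subseteq\{\phi(f)=1\}$ for a single near-optimal algorithm $f$ in $\mathcal{F}(\mc_r)$, and (iii) invoke (\ref{eq:mss-test-condition}) on that algorithm. The paper compresses step (ii) into the one-line inequality $P\{\inf_{t\in T_r}\phi(h_t)=1\}\le P\{\phi(f_r)=1\}$, which is only immediate when the hypothesized $f_r$ is itself a constant algorithm $f_{h_{t^\star}}$ with $t^\star\in T_r$; your version makes this choice explicit by constructing $t_\star$ with $k_n R(h_{t_\star})\le k_n R^\ast+\epsilon$. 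You also track the $k_n$ factor correctly (the paper's displayed inequality $R_n(f_r)\le\argmin_k\inf_t R(h_t)+\epsilon$ has a typo and suppresses the $k_n$). The interpretive caveat you flag---that (\ref{eq:mss-test-condition}) is stated for the pool $\{f_1,\dots,f_d\}$ and must be read as transporting to the constant algorithm $f_{h_{t_\star}}$---is exactly the implicit move the paper makes, consistent with its remark that ``all that is required is that such algorithms \dots\ exist.''
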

In other words, Proposition \ref{prop:uniform} shows that the test \( \psi_r = I \left \{  \inf_{t \in T_r} \phi \left ( h_t \right ) = 1\right \} \) for \( r \in [d] \) satisfies the point-wise coverage property (\ref{eq:pointwise-mcs-coverage}) if there exist algorithms \( f_1, \dots, f_d \) satisfying (\ref{eq:rashomon-set}) for all \( r \in [d] \).
Notably, the computation of \( \psi_1, \dots, \psi_{d} \) does not require identifying \( f_1, \dots, f_d \); all that is required is that such algorithms satisfying (\ref{eq:mss-test-condition}) and (\ref{eq:rashomon-set}) exist. The test \( \psi_r = I \left \{  \inf_{t \in T_r} \phi \left ( h_t \right ) = 1\right \} \) is equivalent to rejecting \( \mc_{r} \) if the class does not appear to contain any well-performing models based on the MSS test \( \phi \). This approach is synonymous with recently proposed methods for universal hypothesis testing which is equivalent to the problem of MCS when \( d=2 \), \( \mc_1 \subseteq \mc_2 \), and the primary goal is to test \( H_{0,1} \) \citep{wasserman_universal_2020,dey_generalized_2025}.

The advantage of applying \( \phi \) to the entire model class is that both finite-sample and asymptotic coverage guarantees hold. The computation of such a procedure, however, may be computationally expensive. In large sample size scenarios, one would hope that we could utilize the data to select a set of nearly optimal models from each class and use that to conduct the comparison instead. Under suitable assumptions on \( f_1, \dots, f_d \), it turns out that this is a feasible approach to obtain an asymptotic result similar to Proposition \ref{prop:uniform}. 
\begin{proposition}[Selective MCS Test]
  \label{prop:selective}
  Define the test  \( \psi_r =  \phi \left ( f_{r} \right )  \) for all \( r \in [d] \). Assume for some \( r \in [d] \) that \( f_r \) satisfies
  \begin{equation} \label{cond:rashomon-convergence} R_n \left ( f_r \right ) - \inf_{t \in T_r} R \left ( h_t \right ) - \epsilon \le o_p \left ( 1 \right ) \end{equation}
  and \( \phi \) satisfies (\ref{eq:mss-test-condition}). Then, \( \psi_r \) satisfies the asymptotic version of (\ref{eq:test-condition}).  Moreover, if (\ref{eq:mss-test-condition}) and (\ref{cond:rashomon-convergence}) are satisfied for all \( r \in [d] \), then the confidence set \( \hat{\Theta} \) generated by \( \psi_1, \dots, \psi_{d} \) satisfies the asymptotic version of (\ref{eq:pointwise-mcs-coverage}).
\end{proposition}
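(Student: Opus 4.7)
The plan is to reduce the selective test $\psi_r = \phi(f_r)$ back to the MSS Type I guarantee already imposed on $\phi$, by showing that under $H_{0,r}$ the random event $\{r \in \Theta^{MSS}_{\epsilon,n}\}$ holds with probability tending to one. Once that containment is available, the MSS guarantee transfers to $\psi_r$ at essentially no asymptotic cost. The whole argument will be organized around the deterministic decomposition
\[ P(\psi_r = 1) \;\le\; P\bigl(\phi(f_r)=1 \cap r \in \Theta^{MSS}_{\epsilon,n}\bigr) \;+\; P\bigl(r \notin \Theta^{MSS}_{\epsilon,n}\bigr), \]
where the first summand is bounded by $\alpha$ in the limit via (\ref{eq:mss-test-condition}) and the second summand is the real target.

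To handle the second summand I would combine two estimates. From the null hypothesis $r \in \Theta$, writing $R^{\star} := \min_{j \in [d]} \inf_{t \in T_j} R(h_t)$ gives $\inf_{t \in T_r} R(h_t) = R^{\star}$, so (\ref{cond:rashomon-convergence}) upgrades to the upper bound $R_n(f_r) \le R^{\star} + \epsilon + o_p(1)$. Separately, each fitted model $f_j(\ndata_l)$ lies in $\mc_j$ and hence satisfies $R(f_j(\ndata_l)) \ge \inf_{t \in T_j} R(h_t) \ge R^{\star}$; this pointwise lower bound propagates through the aggregation defining $R_n$ to give the deterministic bound $\min_{j \in [d]} R_n(f_j) \ge R^{\star}$. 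Subtracting the two estimates yields
\[ R_n(f_r) - \min_{j \in [d]} R_n(f_j) - \epsilon \;\le\; o_p(1), \]
which is exactly the one-sided statement that $r$ is asymptotically an element of $\Theta^{MSS}_{\epsilon,n}$. Plugging this back into the decomposition closes the single-index claim. The ``moreover'' clause then follows immediately because (\ref{eq:pointwise-mcs-coverage}) is equivalent to $\limsup_n \sup_{P} \sup_{r \in \Theta(P)} P(\psi_r = 1) \le \alpha$, and the per-index argument applies to every $r \in \Theta(P)$ once (\ref{cond:rashomon-convergence}) is assumed for all of $[d]$; no union bound is needed since coverage is only pointwise in $r$.

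The step I expect to be most delicate is passing from $R_n(f_r) - \min_j R_n(f_j) - \epsilon \le o_p(1)$ to $P(r \notin \Theta^{MSS}_{\epsilon,n}) \to 0$, since a one-sided $o_p(1)$ bound only controls probabilities of excursions beyond strictly positive thresholds rather than of exceeding exactly $\epsilon$. I would resolve this either by invoking mild continuity of the limiting law of $R_n(f_r) - \min_j R_n(f_j)$ at the threshold $\epsilon$, by applying the MSS guarantee at any $\epsilon' > \epsilon$ and then sending $\epsilon' \downarrow \epsilon$, or by absorbing an arbitrarily small slack into the margin parameter, whichever aligns with the authors' conventions on strict versus weak inequalities in the definitions of $\Theta^{MSS}_{\epsilon,n}$ and (\ref{eq:mss-test-condition}).
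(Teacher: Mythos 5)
Your proposal is correct and follows essentially the same route as the paper's own proof: both reduce $\psi_r=\phi(f_r)$ to the MSS guarantee by decomposing on the event that $r\in\Theta^{MSS}_{\epsilon,n}$, which under $H_{0,r}$ is implied by $R_n(f_r)\le \inf_{t\in T_r}R(h_t)+\epsilon$ together with the deterministic lower bound $\min_{j\in[d]}R_n(f_j)\ge \min_{j\in[d]}\inf_{t\in T_j}R(h_t)$ (left implicit in the paper, made explicit by you), showing the complementary event is asymptotically negligible via (\ref{cond:rashomon-convergence}), applying (\ref{eq:mss-test-condition}), and then obtaining the coverage claim exactly as in Proposition \ref{prop:uniform}. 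The boundary delicacy you flag at the threshold $\epsilon$ is simply elided in the paper, whose proof reads (\ref{cond:rashomon-convergence}) as directly giving $P\left\{R_n(f_r)>\inf_{t\in T_r}R(h_t)+\epsilon\right\}\to 0$, so no continuity-of-the-limit or $\epsilon'\downarrow\epsilon$ device is invoked there.
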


Although condition (\ref{cond:rashomon-convergence}) is a strong requirement, it allows us to accurately underestimate the optimal risk \( \inf_{t \in T_r} R \left ( h_t \right ) \) in large samples. The condition can generally be satisfied in learnable problems when \( f_r \left ( \ndata \right ) \) computes an approximate empirical risk minimizer (ERM) across \( \mc_r \) i.e.
\[ f_r \left ( \ndata \right ) \in \left \{ h \in \mc_r \; | \; n^{-1} \sum_{Z \in \ndata} \ell \left ( h, Z \right ) \le  \arg \min_{h \in \mc_r} n^{-1} \sum_{Z \in \ndata} \ell \left ( h, Z \right ) + o \left ( n^{-\beta} \right )\right \} \]
where  \(\beta  > 0 \) controls the rate of convergence. A problem is often considered to be learnable if the empirical risks of models in the  model class converge to their population risk uniformly \citep{vapnik_nature_2000,shalev-shwartz_learnability_2010}. Such uniform convergence can be written as
\begin{equation} \label{eq:unif-convergence} \sup_{t \in T_{r}} \left | R \left ( h_{t} \right ) - \frac{1}{n} \sum_{i=1}^{n} \ell \left (h_{t}, Z_i \right ) \right | \overset{p}{\to} 0  \end{equation}
under the data generating process \( P \). It has been shown that (\ref{eq:unif-convergence}) is implied for a model class \( \mc_{r} \) whenever \( \mc_{r} \) is finite, has finite VC dimension, or has finite fat-shattering dimension \citep{vapnik_nature_2000}. Recent work has also shown that under sufficient stability assumptions, there exist strategies besides almost-ERMs satisfying (\ref{cond:rashomon-convergence}) \citep{shalev-shwartz_learnability_2010}.

\section{Proposed Methods}
\label{sec:proposed-methods}


In Section \ref{ch:mcs-construction}, we showed that if there exists a MSS test \( \phi \) satisfying (\ref{eq:mss-test-condition}), then by Propositions \ref{prop:uniform} or \ref{prop:selective}, we may use \( \phi \) to a construct a MCS test for the model class \( \mc_r \). As a result, in this section, we detail and extend current data splitting based approaches for MSS that generally satisfy these conditions under mild assumptions.


To motivate the proposed methodology, for all \( r \in [d] \), let \( s_r \in [d] \backslash r \) be any other class index besides \( r \).  Let \( \ndata_{-j} = \ndata \backslash \ndata_j \) denote the data held out from training for each data split \( j = 1,..,k_n \). In addition, set \( n_{te} = n - n_{tr}\) to be the number of held out observations.

Observe that the hypothesis \( H^{MSS}_{0,n,r} \) implies \( R_n \left ( f_r \right ) \le \min_{j \in [d]} R_n \left ( f_j \right ) + \epsilon \). So, we would hope to use the data to estimate \( R_n(f_r) - \min_{j \in [d]} R_n \left ( f_j \right )  - \epsilon \). However, we do not have access to \( \min_{j \in [d]} R_n \left ( f_j \right ) \). To alleviate this issue, note
that when \( H^{MSS}_{0,n,r} \) is true, \( R_n \left ( f_r \right ) \le  R_n \left ( f_{s_{r}} \right ) + \epsilon \) and so we can instead test if \( R_n(f_r) > R_n(f_{s_r}) + \epsilon \). We are able to estimate the expected risk difference  \( R_n(f_r) - R_n(f_{s_{r}})\), by the sample average \(\bar{R}_n \left (f_{r}, f_{s_{r}} \right ) \) where

\[ \bar{R}_n \left (f_r, f_{s_r} \right ) = \frac{1}{k_n n_{te}} \sum_{j=1}^{k_n} \sum_{Z \in \ndata_{-j}} \nabla_{n,j} \left ( f_{r}, f_{s_r}, Z \right ) \]

and
\[ \nabla_{n,j}(f_{r}, f_{s_r}, Z) =  \ell \left ( f_r \left ( \ndata_j \right ), Z \right ) - \ell \left ( f_{s_r} \left ( \ndata_j \right ), Z \right ) \text{ for } j = 1,...,k_n. \]

\subsection{Studentization}
\label{sec:studentization}

A collection of recently proposed approaches for MSS are based on sample splitting and self normalization strategies \citep{takatsu_bridging_2025,kim_locally_2025}. In this section, we show that such studentized procedures also allow for the construction of valid methods for MCS.

To showcase the approach, let
\[  \bar{\sigma}_n \left (f_r, f_{s_r} \right ) = \frac{1}{k_n} \sum_{j=1}^{k_n} \frac{1}{n_{te} - 1} \sum_{Z \in \ndata_{-j}} \left \{  \nabla_{n,j} \left (f_r, f_{s_r}, Z \right ) - \frac{1}{n_{te}} \sum_{Z \in \ndata_{-j}} \nabla_{n,j} \left ( f_r, f_{s_r}, Z \right ) \right \} ^{2} \]
be an average of the sample variances of the differences in loss between the fitted models. The following result shows that when there is a single data split (\( k_{n}=1 \)), a corresponding test dependent on central limit theory satisfies (\ref{eq:mss-test-condition}) in large samples.

\begin{theorem}[Validity of \( \phi_{CLT} \) when \( k_n = 1 \)]
  \label{thm:sui-holdout}
  Set \( k_n = 1 \). Suppose \( n_{te} \) is an increasing sequence such that \( n_{te} \rightarrow \infty \). Let \( \mu_n =  R_n(f_r) - R_n (f_{s_r})\) and \(\sigma_n^2 = \text{Var}_{Z_{0}} \left [  \nabla_{n,1}(f_r, f_{s_r}, Z_{0}) \right ] \). So long as the sequence \( \left ( \nabla_{n,1}(f_r, f_{s_r}, Z_{0})  - \mu_n \right )^{2} / \sigma_n^2  \) is uniformly integrable, the test
\begin{equation} \label{eq:sui} \phi_{CLT} \left ( f_r \right ) = I \left \{  \bar{R}_n \left (f_r, f_{s_r} \right ) > k^{-1/2}_n n_{te}^{-1/2} \bar{\sigma}_n \left ( f_r, f_{s_r} \right ) \left (\Phi^{-1}(1-\alpha) + \epsilon \right ) \right \}  \end{equation}
satisfies the asymptotic version of (\ref{eq:mss-test-condition}).
\end{theorem}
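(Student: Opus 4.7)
The plan is to view the rejection event as a studentized one-sided $z$-type test on the held-out loss-difference sample $\{\nabla_{n,1}(f_r, f_{s_r}, Z)\}_{Z \in \ndata_{-1}}$ and combine three ingredients: a Lindeberg--Feller CLT for $\bar R_n$, a weak law for the studentizer $\bar\sigma_n^2$, and Slutsky's theorem. Conditional on the training data $\ndata_1$, the summands are i.i.d.\ with random-but-fixed conditional mean $\mu_n$ and variance $\sigma_n^2$; on the null event $\{r \in \Theta^{MSS}_{\epsilon,n}\}$ one has $\mu_n = R_n(f_r) - R_n(f_{s_r}) \le \epsilon$, which is the inequality that will be played against the $\epsilon$ appearing in the rejection threshold.

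The first step is to establish $T_n := \sqrt{n_{te}}(\bar R_n - \mu_n)/\sigma_n \overset{d}{\to} N(0,1)$. Working conditionally on $\ndata_1$, the Lindeberg condition for this triangular array reduces to
\[ \mathbb{E}\!\left[ \tfrac{(\nabla_{n,1}(Z_0) - \mu_n)^2}{\sigma_n^2}\, I\!\left\{ \tfrac{|\nabla_{n,1}(Z_0) - \mu_n|}{\sigma_n} > \delta \sqrt{n_{te}} \right\} \right] \to 0 \quad \text{for every } \delta > 0, \]
which is exactly the conclusion of uniform integrability of $(\nabla_{n,1}(Z_0) - \mu_n)^2 / \sigma_n^2$, since the indicator sets shrink as $n_{te} \to \infty$. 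Because the limit law $N(0,1)$ is parameter-free, the conditional convergence passes to the marginal distribution.

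The second step is $\bar\sigma_n^2 / \sigma_n^2 \overset{p}{\to} 1$. Decomposing
\[ \tfrac{\bar\sigma_n^2}{\sigma_n^2} = \tfrac{n_{te}}{n_{te}-1}\!\left( \tfrac{1}{n_{te}} \sum_{Z \in \ndata_{-1}} \tfrac{(\nabla_{n,1}(Z) - \mu_n)^2}{\sigma_n^2} \;-\; \tfrac{(\bar R_n - \mu_n)^2}{\sigma_n^2} \right), \]
the normalized squared deviations have conditional mean $1$, and a triangular-array WLLN (again justified by uniform integrability, which upgrades $L^1$-boundedness of the terms to $L^1$-convergence of the average to its mean) sends the inner sum to $1$ in probability; the centering correction is $O_p(n_{te}^{-1})$ by the first step.

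Combining the two steps via Slutsky and invoking the null inequality $\mu_n \le \epsilon$ bounds the rejection probability by $\limsup_n \mathrm{P}(\phi_{CLT}(f_r) = 1 \cap r \in \Theta^{MSS}_{\epsilon,n}) \le \mathrm{P}(N(0,1) > \Phi^{-1}(1-\alpha)) = \alpha$, as required. The principal technical obstacle is that $\mu_n$ and $\sigma_n$ are themselves random, depending on the fitted models $f_r(\ndata_1)$ and $f_{s_r}(\ndata_1)$, so the sequence $\nabla_{n,1}$ carries no single limit law across $n$; each of the three steps above must therefore be carried out conditionally on $\ndata_1$, with the uniform-integrability hypothesis supplying precisely the uniformity needed to pass from these conditional limit statements to the unconditional rejection-probability bound.
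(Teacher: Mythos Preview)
Your proposal is correct and follows essentially the same route as the paper: a triangular-array CLT for $\sqrt{n_{te}}(\bar R_n-\mu_n)/\sigma_n$, a WLLN giving $\bar\sigma_n/\sigma_n\overset{p}{\to}1$ (both driven by the uniform-integrability hypothesis), Slutsky, and the null inequality $\mu_n\le\epsilon$. The paper packages the first two steps as citations to Lehmann's Lemmas~11.4.1--11.4.3 rather than verifying Lindeberg's condition directly, and it leaves implicit the conditioning on $\ndata_1$ that you spell out; otherwise the arguments coincide.
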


The uniform integrability condition ensures that the differences in losses are well-behaved and hold if \( \sup_{n} \mathbb{E} \left [ \left |   \nabla_{n,1}  \left (f_r, f_{s_r}, Z_0 \right ) / \sigma_n \right |^{\beta} \right ] < \infty \) for some \( \beta > 2 \) (which occurs under bounded loss) and does not converge to a degenerate distribution \citep{bayle_cross-validation_2020}.

Computation of \( \inf_{t \in T_r } \phi_{CLT} \left (h_{t} \right ) \) is difficult since we have to account for the sample standard deviation estimate \( \bar{\sigma}_n \left ( f_r, f_{s_r} \right ) \). This makes Proposition \ref{prop:selective} a more feasible solution to extending the test \( \phi_{CLT} \) to MCS than Proposition \ref{prop:uniform}. Recall that to utilize Proposition \ref{prop:selective}, we would require \( R_n \left (f_r  \right )- \inf_{t \in T_r} R \left ( h_t \right ) - \epsilon \le o_{p} \left ( 1 \right ) \). In other words, the multi-fold test error of \( f_r \) must eventually be \( \epsilon \)-close to the optimal risk of models in the class \( \mc_r \). As noted in Section \ref{sec:mss-to-mcs}, taking \( f_r \) to be an (almost)-ERM is one way to satisfy this condition. For most stable algorithms, the size of the training sets \( n_{tr} \) typically controls the rate of convergence of \( R_n \left ( f_r \right ) \).  If the size of the training sets is small relative to the size of the test sets, then the event \( R_n \left (f_r  \right )- \inf_{t \in T_r} R \left ( h_t \right ) - \epsilon \le 0 \) may only be likely to occur in extremely large samples when \( \epsilon > 0 \) is very small. See Section \ref{sec:conservative-studentization} of the Appendix for an example of such a setting and a way to conservatively modify \( \phi_{CLT} \) to fix this potential issue. In practice with \( \phi_{CLT} \) in Section \ref{sec:experiments}, we set the size of the test set to be smaller than or equal to the size of the training set and \( \epsilon \approx 0 \).
 
Finally, it is worth noting that there is reason to expect Theorem \ref{thm:sui-holdout} could be established under more mild conditions than uniform integrability (see, e.g.\ recent work from \citet{takatsu_bridging_2025} and \citet{kim_locally_2025}). For simplicity, we utilized the basic definition of uniform integrability and our results were developed irrespective of previous central limit theorems related to this approach.

\subsection{Cross-Validation}
\label{sec:cross-validation}

In Theorem \ref{thm:sui-holdout}, we focused on the case when only a single data split is utilized (\( k_n=1 \)). Relying on only a single data split, however, may result in instability. The typical solution to reduce such variability is to take \( k_n > 1 \) and average across the results of multiple splits in the computation of \( \bar{R}_n \left (f_r, f_{s_r} \right ) \). Such an average involves a sum of \( k_n \) dependent statistics. Consequently, we cannot directly apply the classical central limit theorems to asymptotically control the type I error.

Cross-validation is widely recognized as a standard approach to achieving potentially more stable error estimates by partitioning the data into \( k_n \) folds and treating each as a hold-out set \citep{bates_cross-validation_2023}. Without loss of generality, for the remainder of this section, we will assume that \( k_n > 1\) evenly divides \( n \) and define the folds sequentially. Let the \( j^{th} \) test fold for \( j \in [k_n] \) be defined as \( \ndata_{-j} = \left \{ Z_{(j-1)n_{te} +1}, \dots, Z_{jn_{te}} \right \} \) where \( n_{te} = \frac{n}{k_n} \). Then, \( \bar{R}_n \left ( f_r, f_{s_r} \right ) \) may be interpreted as the cross-validation estimate of \( R_n \left ( f_r \right ) - R_n \left ( f_{s_r} \right ) \).

A collection of recent works have shown that \( \bar{R}_n \left (f_r, f_{s_r} \right ) \) will be asymptotically normal as long as the model fitting algorithms, \( f_r \) and \( f_{s_r} \), satisfy sufficient notions of loss stability \citep{austern_asymptotics_2025,bayle_cross-validation_2020}. In this work, we consider a weak form of loss stability which ensures that the standardized cross-validation estimate converges asymptotically to a standard normal centered around \( R_n \left ( f_r \right ) - R_n \left ( f_{s_r} \right ) \) \citep{bayle_cross-validation_2020}. To define this notion of loss stability, let \( \ndatai \) represent \( \ndata \) but with \( Z_i \) replaced by \( Z'_{0} \) where \( Z_0' \iid P \). For any function \( g: \bigcup_{m=1}^{\infty} \mathcal{Z}^m \times \mathcal{Z} \rightarrow \mathbb{R} \), the loss stability of \( g \) for \( m \) training observations is defined as 
  \[  \gamma_{m}^{loss}(g) = \frac{1}{m} \sum_{i=1}^{m} \mathbb{E} \left [ \left ( g' \left (\mdata, Z_0 \right ) - g' \left (\mdatai, Z_0 \right ) \right )^{2}  \right ]   \]
  where
  \[   g' \left ( \mdata, Z_0 \right ) = g \left ( \mdata, Z_0 \right ) - \mathbb{E} \left [ g \left ( \mdata, Z_0 \right ) \bigg | \mdata \right ] . \]

Beyond stability of the model fitting algorithms, we will also require integrability conditions similar to those of Theorem \ref{thm:sui-holdout}, but for the average loss difference across all possible training folds of size \(n_{tr} = n - n_{te} \). Let
\[ \bar{\nabla}_{n}(z) = \mathbb{E}_{\ndata_{1}} \left [ \nabla_{n,1} \left ( f_r, f_{s_r}, z \right ) \right ]  \]
 represent such an average. Under the assumption of loss stability, the result of Theorem \ref{thm:sui-holdout} still holds even when averaging across \( k_n  < n \) dependent statistics.

\begin{theorem}[Validity when \( k_n < n \)]
  \label{thm:sui-cv}
  Assume \( k_n < n \). Let \( \bar{\mu}_n = \mathbb{E} \left [ R_n \left ( f_r \right ) - R_n \left ( f_{s_r} \right ) \right ]\), \( \sigma_n^2 = \text{Var} \left [ \bar{\nabla}_{n} \left (  Z_0 \right ) \right ] \), and \( g \left ( \mdata, z \right ) =  \ell \left ( f_r \left ( \mdata \right ), z \right ) - \ell \left ( f_{s_{r}} \left ( \mdata \right ), z \right )\). Assume the sequence \( \left ( \bar{\nabla}_n(Z_0) - \bar{\mu}_n \right )^{2} / \sigma_n^2  \) is uniformly integrable and \( \gamma_{n_{tr}}^{loss}(g)  = o \left ( \sigma^2_n / n \right ) \). Then, the test (\ref{eq:sui}) 
satisfies the asymptotic version of (\ref{eq:mss-test-condition}).
\end{theorem}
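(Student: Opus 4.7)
The plan is to follow the skeleton of the proof of Theorem~\ref{thm:sui-holdout}, replacing the classical CLT for a single test fold with a central limit theorem tailored to cross-validation that accommodates the dependencies introduced by overlapping training folds. Specifically, under the hypothesized loss stability $\gamma_{n_{tr}}^{loss}(g) = o(\sigma_n^2/n)$ and uniform integrability of $(\bar{\nabla}_n(Z_0) - \bar{\mu}_n)^2/\sigma_n^2$, I would invoke a cross-validation CLT of the Bayle et al.\ type to conclude that
\[ \sqrt{n}\, \frac{\bar{R}_n(f_r, f_{s_r}) - \bar{\mu}_n}{\sigma_n} \overset{d}{\to} N(0,1). \]
The stability bound is essential here: it controls the pairwise covariances between loss evaluations from different test folds (which share most of their training data) and forces $\text{Var}(\bar{R}_n) = (\sigma_n^2/n)(1+o(1))$, matching the i.i.d.\ benchmark of Theorem~\ref{thm:sui-holdout} and permitting a Lindeberg-type argument across the dependent summands.

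Next I would establish that the studentizer is consistent in the ratio sense, $\bar{\sigma}_n(f_r, f_{s_r})/\sigma_n \overset{p}{\to} 1$. The conditional expectation of each fold-level sample variance given its training fold equals the conditional variance of the loss difference at a fresh test point; uniform integrability ensures these concentrate in $L^1$ around $\sigma_n^2$, while the stability bound controls between-fold fluctuations. A direct second-moment calculation then yields consistency. Combining with the CV-CLT above via Slutsky gives asymptotic standard normality of the studentized statistic $T_n = \sqrt{n}(\bar{R}_n - \bar{\mu}_n)/\bar{\sigma}_n$. To conclude the level guarantee, I would rewrite the rejection event in terms of $T_n$, translate the random null bound $R_n(f_r) - R_n(f_{s_r}) \le \epsilon$ into a bound on $\bar{\mu}_n$ via stability-based concentration of the risk difference around its expectation, and note that the extra $\epsilon$ term in the critical value absorbs the residual mean shift, so that $P(\phi_{CLT} = 1,\, r \in \Theta^{MSS}_{\epsilon,n})$ is asymptotically bounded above by $1 - \Phi(\Phi^{-1}(1-\alpha)) = \alpha$.

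The principal obstacle is the cross-validation CLT itself. Because of the overlap in training folds across different test folds, the summands in $\bar{R}_n$ are not independent, and the core technical step is to use $\gamma_{n_{tr}}^{loss}(g) = o(\sigma_n^2/n)$ to show that the off-diagonal covariance contributions to $\text{Var}(\bar{R}_n)$ are of smaller order than $\sigma_n^2/n$, after which a Lindeberg (or martingale) condition, supplied by the uniform integrability hypothesis, delivers asymptotic normality for the dependent sum. A secondary but nontrivial subtlety is reconciling the random null event, which constrains $R_n(f_r) - R_n(f_{s_r})$, with the deterministic centering $\bar{\mu}_n$ that appears in the CLT; this again relies on loss stability, this time to guarantee that the conditional and unconditional risk differences coincide up to $o_p(\sigma_n/\sqrt{n})$.
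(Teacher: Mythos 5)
Your skeleton (a cross-validation CLT, ratio consistency of the studentizer, Slutsky, then the same closing chain of inequalities as in Theorem~\ref{thm:sui-holdout}) matches the paper's strategy, but there is a genuine gap in where you center the CLT. You center at the deterministic quantity \( \bar{\mu}_n = \mathbb{E}\left[ R_n(f_r) - R_n(f_{s_r}) \right] \) and then propose to ``translate the random null bound \( R_n(f_r) - R_n(f_{s_r}) \le \epsilon \) into a bound on \( \bar{\mu}_n \)'' by arguing that loss stability makes the random and expected risk differences coincide up to \( o_p(\sigma_n/\sqrt{n}) \). That last claim is not supported by the hypotheses. The loss stability functional \( \gamma^{loss}_{n_{tr}}(g) \) is defined through the \emph{conditionally centered} losses \( g'(\mdata, Z_0) = g(\mdata, Z_0) - \mathbb{E}[\, g(\mdata, Z_0) \mid \mdata \,] \); the centering removes exactly the multi-fold test error, so \( \gamma^{loss}_{n_{tr}}(g) = o(\sigma_n^2/n) \) says nothing about the fluctuations of \( R_n(f_r) - R_n(f_{s_r}) \) around \( \bar{\mu}_n \). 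Indeed, as the paper notes in Section~\ref{sec:mss-to-mcs}, centering cross-validation estimates at the out-of-sample (expected) error requires strictly stronger stability conditions; under weak loss stability alone, \( R_n(f_r) - R_n(f_{s_r}) - \bar{\mu}_n \) may fluctuate at a rate comparable to or larger than \( \sigma_n/\sqrt{n} \), and your final bound would fail.

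The paper sidesteps this entirely: it invokes \citet{bayle_cross-validation_2020} (their Theorems 2 and 4) to get \( \bar{\sigma}_n(f_r,f_{s_r})/\sigma_n \overset{p}{\to} 1 \) and \( \sqrt{n}\,\bigl( \bar{R}_n(f_r,f_{s_r}) - \mu_n \bigr)/\sigma_n \overset{d}{\to} N(0,1) \) with the \emph{random} centering \( \mu_n = R_n(f_r) - R_n(f_{s_r}) \). With that centering, the event \( r \in \Theta^{MSS}_{\epsilon,n} \) gives \( \mu_n - \epsilon \le 0 \) directly, so the closing argument is literally the same as at the end of the proof of Theorem~\ref{thm:sui-holdout}, with no need to compare \( \mu_n \) to \( \bar{\mu}_n \) or to prove any concentration of the risk difference. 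To repair your proof, replace the deterministic centering by the random one (which is exactly what the Bayle-type CLT under loss stability delivers); your plan to re-derive that CLT by a variance decomposition plus Lindeberg argument is reasonable in spirit but is precisely the content of the cited results, which the paper uses off the shelf rather than reproving.
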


Various learning algorithms are capable of satisfying the loss stability condition with suitable hyper-parameter choices \citep{bayle_cross-validation_2020}. A few notable examples include stochastic gradient descent, nearest neighbor, and ensemble based methods \citep{hardt_train_2016,devroye_distribution-free_1979,elisseeff_stability_2005}. Different choices of standardization or stability conditions have also been previously considered in works on cross-validation and could be a potential avenue for different results \citep{austern_asymptotics_2025,bayle_cross-validation_2020}.

\subsection{Universal Infernce}
\label{sec:ui}
Thus far, we have considered studentization strategies that yield asymptotic tests for MCS. In this section, we propose an alternative approach to creating tests that appropriately control the type I error in finite samples. The methodology is based on a MSS test referred to as universal inference (UI) \citep{wasserman_universal_2020}. The UI method was constructed as a finite-sample alternative to the asymptotic likelihood ratio test for regular statistical models and was later generalized to arbitrary loss functions and model classes \citep{dey_generalized_2025}. In the next result, we show that universal inference is a valid strategy satisfying (\ref{eq:mss-test-condition}) under a type of strong central condition.
\begin{theorem}[Validity of UI]
  \label{thm:ui}
  For some \( r \in [d] \), assume that there exists \( \bar{\omega} > 0 \) such that
  \begin{equation}\label{eq:strong-central-condition} \mathbb{E}_{Z_{0}} \exp \left [ \omega \left \{ \nabla_{n,1} \left ( f_r, f_{s_r}, Z_0 \right ) \right \}  \right] \le 1 \text{ almost surely for all } \omega \in [0, \bar{\omega}).\end{equation}
  Then, the UI test
  \begin{equation} \label{eq:ui} \phi_{UI} \left (f_r \right ) = I \left \{ \bar{R}^{\exp}_n(f_r, f_{s_r}, \omega ) - \epsilon >  \alpha^{-1}  \right \}  \end{equation}
  where
  \[ \bar{R}^{\exp}_n \left ( f_r, f_{s_r}, \omega \right ) = k_n^{-1} \sum_{j=1}^{k_n} \exp \left \{ \omega \sum_{Z \in \ndata_{-j}} \nabla_{n,j} \left ( f_r, f_{s_r}, Z \right ) \right \},    \]
  satisfies the finite-sample version of (\ref{eq:mss-test-condition}) for any \( \omega \in [0, \bar{\omega}) \).
\end{theorem}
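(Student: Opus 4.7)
The plan is to prove $\mathbb{E}[\bar{R}^{\exp}_n(f_r, f_{s_r}, \omega)] \le 1$ under the strong central condition (\ref{eq:strong-central-condition}), then apply Markov's inequality. Because $\phi_{UI}(f_r) = 1$ exactly when $\bar{R}^{\exp}_n(f_r, f_{s_r}, \omega) > \alpha^{-1} + \epsilon$, this immediately yields $P(\phi_{UI}(f_r) = 1) \le \alpha$. Note the type I error bound flows entirely from the strong central condition; we will not need to invoke the event $\{r \in \Theta^{MSS}_{\epsilon,n}\}$ at all, since $P(\phi_{UI}(f_r) = 1 \cap r \in \Theta^{MSS}_{\epsilon,n}) \le P(\phi_{UI}(f_r) = 1)$ trivially.

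For the mean bound, I would first condition on a single training fold $\ndata_j$. Under the split construction, the test points $\{Z : Z \in \ndata_{-j}\}$ are i.i.d.\ $\sim P$ and independent of $\ndata_j$; moreover, the fitted models $f_r(\ndata_j)$ and $f_{s_r}(\ndata_j)$ are deterministic given $\ndata_j$. The a.s.\ qualifier in (\ref{eq:strong-central-condition}) then gives, for any $\omega \in [0, \bar\omega)$ and each $Z \in \ndata_{-j}$,
\[ \mathbb{E}\bigl[\exp\{\omega \nabla_{n,j}(f_r, f_{s_r}, Z)\} \,\big|\, \ndata_j\bigr] \le 1. \]
Conditional independence of the test observations then makes the joint MGF factorize into a product of such marginal MGFs, so
\[ \mathbb{E}\bigl[\exp\{\omega \textstyle\sum_{Z \in \ndata_{-j}} \nabla_{n,j}(f_r, f_{s_r}, Z)\} \,\big|\, \ndata_j\bigr] \le 1. \]
Taking the outer expectation over $\ndata_j$ preserves the bound, and averaging across $j \in [k_n]$ via linearity gives $\mathbb{E}[\bar{R}^{\exp}_n(f_r, f_{s_r}, \omega)] \le 1$.

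Markov's inequality then closes the argument: $P(\bar{R}^{\exp}_n > \alpha^{-1} + \epsilon) \le (\alpha^{-1} + \epsilon)^{-1} = \alpha/(1 + \alpha \epsilon) \le \alpha$, which gives the finite-sample form of (\ref{eq:mss-test-condition}).

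The only real subtlety to be careful about is the role of the almost-sure qualifier in (\ref{eq:strong-central-condition}). Because the bound on the inner conditional expectation holds a.s.\ with respect to the $\sigma$-algebra generated by $\ndata_j$, the factorization step is legitimate and the resulting product stays below 1 pathwise in $\ndata_j$, not merely in expectation. The dependence across the $k_n$ folds (which share data points across different roles of training and testing) is a non-issue here because only first moments are used, and linearity of expectation handles the averaging across folds without any stability or independence hypothesis.
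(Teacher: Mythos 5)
Your proof is correct and follows essentially the same route as the paper's: bound $\mathbb{E}[\bar{R}^{\exp}_n(f_r,f_{s_r},\omega)]\le 1$ by conditioning on each training fold, factorizing the conditional MGF over the i.i.d.\ held-out points, and invoking the strong central condition, then apply Markov's inequality (the paper simply drops the intersection with $\{r\in\Theta^{MSS}_{\epsilon,n}\}$ and uses the threshold $\alpha^{-1}$, while you use $\alpha^{-1}+\epsilon$; both give the bound $\le\alpha$). The only differences are cosmetic, such as your slightly tighter constant $\alpha/(1+\alpha\epsilon)$.
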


The proof of Theorem \ref{thm:ui} follows in a similar fashion to that of \citep[Lemma 2]{dey_generalized_2025} which is related to the MCS case when \( d = 2 \) and \( k_n = 1 \). Condition (\ref{eq:strong-central-condition}) is a generalized form of the strong central condition used by \cite{dey_generalized_2025} as it reduces to their definition when \( f_{r} \) maps to  the risk minimizer of \( \mc_{r} \) and \( f_{s_{r}} \) selects models uniformly at random from \( \mc_{s_{r}}\). The primary benefit of condition (\ref{eq:strong-central-condition}) is that it does not require assuming a strong central condition on all models in \( \mc_{s_r} \), only the subset that may be selected by \( f_{s_r} \). The strong central condition has been shown to imply fast rates of convergence for learning algorithms \citep{erven_fast_2015}.  A specific case where the strong central condition holds with \( \bar{\omega} = 1 \) is when \( \mc \) is composed of regular statistical models and the loss \( \ell \) is the negative log likelihood. There are numerous other examples where this condition can be shown to hold (see \citep[Section 3]{dey_generalized_2025} or \citep[Section 2.2]{erven_fast_2015}). In situations where the correct choice of \( \omega \) is unknown, it may be possible to conservatively estimate it using general posterior calibration based on the nonparametric bootstrap \citep{dey_generalized_2025}.

Combining Theorem \ref{thm:ui} and Proposition \ref{prop:uniform} shows that if \( \inf_{t \in T_r} \phi_{UI} \left ( h_t \right ) \) is feasible to compute, then the UI procedure can exactly control the type I error rate of MCS in finite samples. On the other hand, if such optimization is not practical, then so long as we choose \( f_r \in \ac \left ( \mc_r \right ) \) well enough such that (\ref{cond:rashomon-convergence}) is satisfied, then Proposition \ref{prop:selective} shows that UI will remain valid for large samples.



\section{Alternative Approaches}
\label{sec:alternative-approaches}

In this section, we provide a brief overview of alternative methods from the MSS literature that may be applied to the problem of MCS. These methods yield both finite sample and asymptotic valid tests for (\ref{eq:mss-test-condition}). In contrast to those reported in Section \ref{sec:proposed-methods}, these approaches typically require more stringent assumptions on the model classes and algorithms of interest or are more computationally expensive.

\textbf{MCS for classes of  regular statistical models:} When the model classes \( \mc_1, \dots, \mc_d \) are composed of regular statistical models and the loss function corresponds to the negative log-likelihood, the asymptotic likelihood ratio test may be used to conduct MCS. A likelihood ratio based methodology has already been proposed to conduct valid MCS with collections of appropriately nested model classes \citep{zheng_model_2019,li_model_2019}. Another line of work proposes a similar method for model classes composed of linear mixed models \citep{jiang_fence_2008}. Furthermore, when the classes are composed of gaussian linear models, the likelihood ratio test is equivalent to conducting an F-test \citep[Proposition 6.1.2]{bickel_mathematical_2015} and provides exact coverage. Similar methodology relying on the F-test applicable to MCS for nested gaussian linear model classes has also been suggested \citep{ferrari_confidence_2015}. See Section \ref{sec:stat-models} of the Appendix for a more detailed account of how to construct a valid MCS test based on the likelihood ratio test.

\textbf{Discrete argmin inference on algorithms and models:} There exist MSS cross-validation based methods that satisfy the asymptotic version of (\ref{eq:mss-test-condition}) \citep{lei_cross-validation_2020, kissel_black-box_2023}. In contrast to the proposed methods of Section \ref{sec:proposed-methods}, however, these approaches are more computationally expensive and require sub-weibull stability assumptions as they investigate the estimated joint distribution of the differences in loss between every pair-wise combination of the algorithms \( f_1, \dots, f_d \). Another group of methods from the field of discrete argmin inference satisfy the asymptotic version of (\ref{eq:mss-test-condition}) when the algorithms \( f_1, \dots, f_d \) map to a fixed model  \citep{hansen_model_2011, kim_locally_2025, zhang_winners_2025}. With additional modification, these methods are able to be applied to MCS. A simple example is the studentization method of Section \ref{sec:studentization}, as it relies on the same data splitting and self normalization strategies of \cite{kim_locally_2025}.

\textbf{Concentration inequalities:} Universal inference falls under a general class of approaches based on concentration inequalities. \cite{takatsu_bridging_2025} showcase a valid construction of another applicable concentration inequality using the one-sided empirical Bernstein inequality. Various forms of other inequalities that do not utilize data splitting can be found in the Probably Approximately Correct (PAC) bound literature \citep{valiant_theory_1984, vapnik_nature_2000,mohri_foundations_2018, elisseeff_stability_2005,shalev-shwartz_learnability_2010,feldman_generalization_2018,dziugaite_computing_2017}. Generally, these methods could be used to conduct finite-sample inference for MCS when combined with Proposition \ref{prop:uniform}. These approaches, however, typically require stringent assumptions on the complexity of the model classes or algorithms of interest, or exponential moment inequalities similar to the strong central condition. Under weaker but similar conditions, concentration inequalities can also be used to construct methods satisfying the asymptotic version of (\ref{eq:mss-test-condition}). For example, procedures for constructing empirical rashomon sets can be used for MCS if there is a constraint on the complexity of the model classes \citep{fisher_all_2019}. Further discussion of these methods is provided in Appendix \ref{sec:ci}.


\section{Numerical Experiments}

\label{sec:experiments}

In this section, we examine the power of the proposed methodology for MCS in simulated and real data scenarios. To conduct MCS, the studentized test \( \phi_{CLT} \) is combined with the testing strategy outlined in Proposition \ref{prop:selective}. Such an approach directly utilizes \( \phi_{CLT} \left ( f_r \right ) \) to test \( H_{0,r}: r \in \Theta \) for any \( r \in [d] \). We use \( \phi_{CLT}^{Holdout} \) and \( \phi_{CLT}^{CV} \) to distinguish between the holdout \( (k_n = 1) \) and cross-validation (\(k_n < n \)) versions of the studentized test discussed in Sections \ref{sec:studentization} and \ref{sec:cross-validation}, respectively. In the holdout case, a \( 50\%/50\% \) split of the data is used to construct the train/test sets. In the cross-validation case, the number of folds is set to \( k_n=10 \). Any performance improvement will be considered significant, so we set \( \epsilon \approx 0\). 

Where applicable, the test \( \phi_{CLT} \) is compared to the UI approach detailed in Section \ref{sec:ui}. The notation \( \phi_{UI-\omega_{or}} \) corresponds to an application of UI to MCS by combining \( \phi_{UI} \) with the testing procedure of Proposition \ref{prop:uniform}. In other words, \( \phi_{UI-\omega_{or}} \) represents the classical approach to hypothesis testing with UI where we reject \( \mc_r \) if \( \inf_{t \in T_{r}} \phi_{UI} \left ( h_{t}  \right ) = 1 \) \citep{wasserman_universal_2020}. Theorem \ref{thm:ui} and Proposition \ref{prop:uniform} show that as long as \( \omega > 0 \) is chosen such that condition (\ref{eq:strong-central-condition}) holds, then \( \phi_{UI-\omega_{or}} \) allows for valid inference in finite samples. In the following experiments, there is not a clear choice for \( \omega \in (0, \bar{\omega}] \). A nonparametric bootstrap procedure was recently built for estimating \( \bar{\omega} \) \citep{dey_generalized_2025}. However, this approach is computationally expensive to replicate across many simulations, has been shown to generally under-estimate \( \bar{\omega} \), and requires that an ERM must be able to be computed for \( \mc_r \) \citep{dey_generalized_2025}. So, instead, to examine the potential power of UI in the following simulations, we use an oracle \( \omega_{or} \) that is estimated to appropriately control type I error. For testing \( H_{0,r} \), in an approach similar to \citep[Algorithm 2]{dey_generalized_2025}, \( \omega_{or} \) was chosen by drawing samples of size \( n \) from the data generating process under \( H_{0,r} \), computing \( \phi_{UI} \left ( f_r \right ) \) on each sample for varying \( \omega \in [0,1] \), then choosing \( \omega_{or} \) to be the largest value for \( \omega \) such that the type I error across the simulations was below \( \alpha \).


\subsection{Nonlinear Regression}
\label{sec:nonlinear-regression}

A primary motivation for MCS is to be able to compare simpler classes of models to complex alternatives. First, we investigate if the proposed methodology is able to properly identify classes capable of modeling nonlinear effects in a nonlinear setting. Assume that \( Z_i = (Y_i, X_i^{\top}) \in \mathbb{R} \times \mathbb{R}^{5} \) where \( X_{i,j} \iid U[0,1] \) for \( i = 1,..,n \),  \( j = 1,...,5 \), and \( U[0,1] \) denotes the uniform distribution on \( [0,1] \). Let
\[ Y_{i} = 10 \gamma \sin \left ( \pi X_{i,1} X_{i,2} \right ) + 20 \gamma \left ( X_{i,3} - 0.05 \right )^2 + 10 X_{i,4} + 5 X_{i,5} + \epsilon_{i}\]
where \(\epsilon_i \iid N(0,10)\) and \(\gamma \ge 0\). We refer to this data generating process as the MARS regression model based on previous works that make use of it \citep{friedman_multivariate_1991, mentch_randomization_2020}. The term \( \gamma \ge 0 \) controls the degree of non-linearity and interactive components in the process. When \( \gamma = 0 \), the MARS model is the same as a gaussian linear model with two features. We consider a class of additive linear models, \( \mc_1 \), and a class of random forest regressors \( \mc_2 \). Specifically, a model \( h \in \mc_1 \) makes predictions of the form \( h \left ( Z_{i} \right ) = \beta^{\top} X_i \) for some \( \beta \in \mathbb{R}^{5} \). In contrast, a model \( h \in \mc_2 \) represents an ensemble of decision trees, with predictions for \( Z_i \) given by the average output of those trees when applied to the features \( X_i \). The algorithms \( f_1 \) and \( f_2 \) correspond to the OLS estimator of \( \mc_1 \) and the sklearn \texttt{randomforestregressor} algorithm with the argument \(n_{estimators} = 30\), respectively. Lastly, the loss function, \( \ell \), is taken as squared error loss: \( \ell \left (h, Z_{i} \right ) = \left ( h \left (Z_{i} \right ) - Y_i \right )^{2} \). The goal of the experiment is to test whether or not an additive linear model is capable of achieving the optimal risk (i.e. in the notation of Section \ref{ch:mcs-construction} the goal is to test \( H_{0,1}: 1 \in \Theta \)).  In the tests, we compare \( f_1 \) to \( f_{s_1} \) where \( s_1 = 2 \). When \( \gamma = 0 \), \( H_{0,1} \) is true. On the other hand, when \( \gamma > 0 \), \( 1 \not \in \Theta \) due to the nonlinear effects. For the \( \phi_{UI-\omega_{or}}, \phi^{Holdout}_{CLT},\) and \( \phi^{CV}_{CLT} \) methods, the results are displayed in Figure \ref{fig:mars-sim} for \( 0 \le \gamma \le 6 \) and sample sizes \( n = 200, 500, \text{ and } 1,000 \). All three methods are capable of detecting the nonlinear signal for large \( \gamma  > 0\). Notably, the \( \phi_{CLT}^{CV} \) approach attain higher power levels than UI with \( \omega_{or} \) for most \( \gamma > 0 \).

     \begin{figure}[t]
	\centering
	\includegraphics[width=\textwidth]{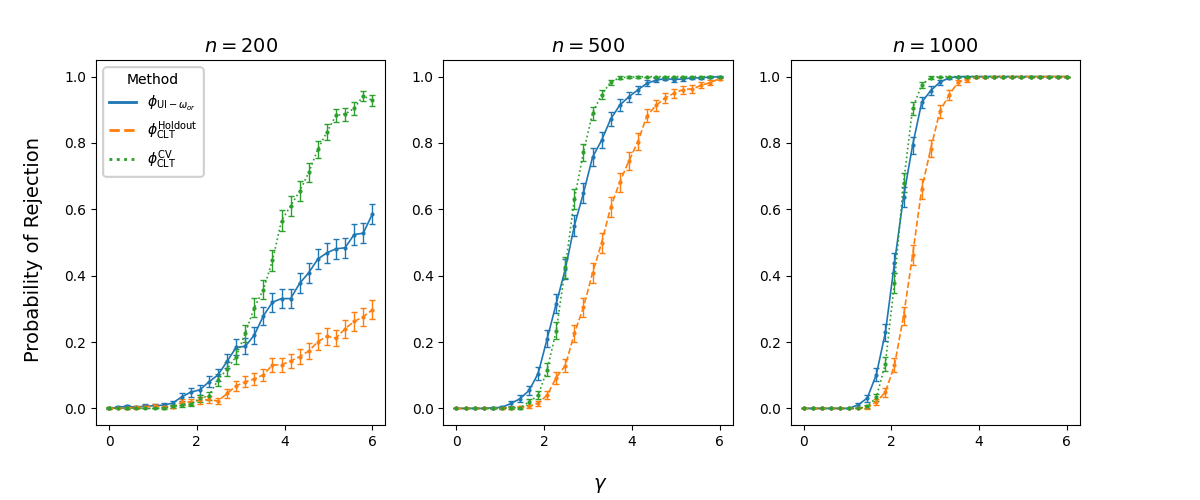}
	\caption{Rejection probabilities estimated across 1,000 simulations for various tests in the nonlinear regression setting with nominal level \( \alpha = 0.05 \). The solid, dashed, and dotted lines correspond to the tests \( \phi_{UI-\omega_{or}}, \phi^{Holdout}_{CLT},\) and \( \phi^{CV}_{CLT} \), respectively. The bars corespond to estimated 95\% confidence intervals for the power. We estimate the oracle choice of \( \omega_{or} \) from the data generating process when \( \gamma = 0 \).}
	\label{fig:mars-sim}
      \end{figure}

      \subsection{Feature Selection}
      \label{sec:feature-selection}

It is also possible to use MCS to compare nested model classes representing different combinations of features. In contrast to the previous section, the simpler classes in the following experiment will be those that utilize a smaller number of the features. Consider the regression case where for \( i = 1,...,n \), \( Z_i = (Y_i, X_i) \in \mathbb{R} \times \mathbb{R}^{6}\), \( X_i \iid N \left ( 0_6, \Sigma \right ) \), \( \epsilon_i \iid N(0, \sigma^2) \), and \( Y_i = X_{i,1} + X_{i,3} + X_{i,5} + \epsilon_i. \)
The covariance matrix, \( \Sigma \in \mathbb{R}^{6 \times 6} \), has entries \( \Sigma_{i,j} = \rho^{|i - j|} \) with \( \rho = 0.35 \). In line with previous works, we choose the noise level \( \sigma^{2} \) based on a corresponding signal to noise ratio level \( \nu \) where \( \sigma^2 = \frac{\beta^{\top} \Sigma \beta}{\nu} \) with \( \beta^{\top} = (1,0,1,0,1,0) \) \citep{hastie_best_2020, mentch_randomization_2020}. In this setting, there are \( 2^6-1 = 63 \) combinations of covariates we could choose to include in a model. Let \( \ell \) again correspond to the squared error loss. Let the model classes \( \mc_1, \dots, \mc_{63} \subset \mc \) correspond to the 63 unique covariate inclusion combinations. Assume \( \mc_{63} \) corresponds to the class of models that includes all the features. Provided the simulated data and tests \( \phi_{r} \) for \(  r \in [d] \), we construct the confidence set \( \hat{\Theta} = \left \{ r \in [62] : \phi_r = 0 \right \} \). Let \( r^{*} \) correspond to the model class \( \mc_{r^{*}} \) that includes only the 1st, 3rd, and 5th covariates. In Figure \ref{fig:feature-selection}, for \( 0.25 \le \nu \le 4 \), we estimate the miss-coverage probability of \( r^{*} \not \in \hat{\Theta} \), average size of \( \hat{\Theta} \), and average uniform coverage rate of \( \hat{\Theta} \supseteq \Theta  \). We consider treating \( \mc_1, \dots, \mc_{63} \) as either classes of additive linear models or random forests with \( 30 \) estimated trees. For each model type and \( r \in [d] \), let \( f_r \) correspond to fitting models in \( \mc_r \) using the sklearn \texttt{linearregression.fit} or \texttt{randomforestregressor.fit} algorithm, respectively. For each \( r \in [d] \), we compare \( f_r \) to \( f_{s_r} \) where \( s_r = 63 \). In other words, we compare each nested class to the most complex class. In the case of the random forest model classes, only \( \phi^{Holdout}_{CLT} \) and \( \phi^{CV}_{CLT} \) are considered since \( \inf_{t \in T_r} \phi_{UI} \left ( h_{t} \right ) \) is unable to be computed for any \( r \in [d] \). The results in Figure \ref{fig:feature-selection} show that all procedures appear to attain similar size model class confidence sets \( \hat{\Theta} \). For \( \nu \le 2 \), the type I errors of \( \phi_{CLT}^{Holdout} \) and \( \phi_{CLT}^{CV} \) rise above the significance level in the random forest case. This is possible for both methods as the random forest(s) selected by \( f_r \) for \( r \in [d] \) may not be near-optimal (i.e. lie in the Rashomon set) in small sample sizes.

\begin{figure}[ht!]
  \includegraphics[width=\textwidth]{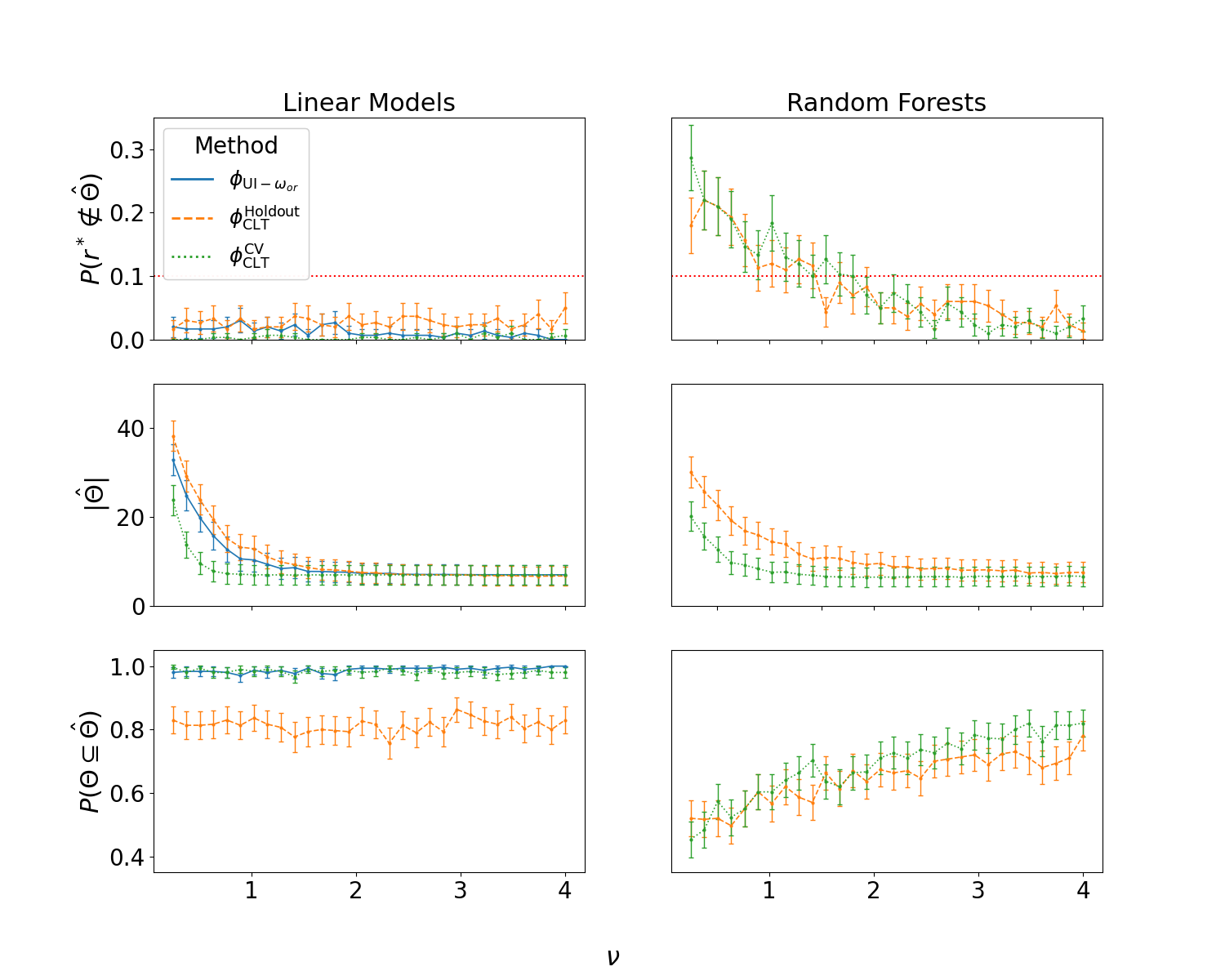}
  \caption{Feature selection performance of MCS with sample size \( n = 200 \) and \( \alpha = 0.1 \) estimated across 300 simulations. Each column corresponds to setting the classes \( \mc_1, \dots, \mc_{63} \) to different types of models. The first, second, and third row depicts the estimated miss-coverage probability of \( r^{*} \not \in \hat{\Theta} \), average size of \( \hat{\Theta} \), and average uniform coverage rate of \( \hat{\Theta} \supseteq \Theta  \) for varying levels of \( \nu \). The red, dotted line represents the significance level. The bars correspond to estimated 95\% confidence intervals. For each level \( \nu \), we estimate the oracle choice of \( \omega_{or} \) for testing \( H_{0,r^{*}}: r^{*} \in \Theta \).}
  \label{fig:feature-selection}
\end{figure}

\subsection{MCS in the Presence of Noise}
\label{sec:mcs-additive-noise}

A recent line of works have shown that adding noise to the data generating process leads to implicit regularization in classification and regression problems \citep{semenova_existence_2022,semenova_path_2023,boner_using_2024,mentch_randomization_2020}. In other words, as more noise is added to the underlying process, these results suggest that simpler models will begin to perform on par with more complex alternatives. In MCS, such an effect would appear as a loss of power, since the optimal models in each class would perform as well as any other under sufficient additive noise. We experimentally investigate this claim using the \( \phi^{Holdout}_{CLT} \) and \( \phi^{CV}_{CLT} \) methods.

Similar to Section \ref{sec:nonlinear-regression}, consider the model classes \( \mc_1 \) and \( \mc_2 \) which represent a class of additive linear models and random forests, respectively. To investigate the effect of additive noise, a collection of real-world datasets from Section \ref{sec:real-world-data} where \( \phi^{CV}_{CLT} \) rejects \(H_{0,1}: 1 \in \Theta \) are used. Two settings are considered: regression and classification. In the regression case, the algorithms \( f_1 \) and \( f_2 \) correspond to the OLS estimator of \( \mc_1 \) and the sklearn \texttt{randomforestregressor} algorithm with the argument \(n_{estimators} = 100\), respectively. In the classification case, the same algorithms represent the sklearn \texttt{logisticregression} and \texttt{randomforestclassifier} algorithms with the argument \(n_{estimators} = 1,000\). Given a dataset \( \ndata =  (Y_i, X_i)_{i=1}^n \), we generate a new dataset \( \ndata_{\rho} = (Y^{\rho}_i, X_i)_{i=1}^n \) where for \( i = 1,...,n \), in the classification case, \( Y_i^{\rho} \) is flipped independently with probability \( \rho \), and in the regression case, \( Y^{\rho}_i = Y_i + 4 \rho \epsilon_i \) where \( \epsilon_i \iid N(0,\hat{\sigma}) \) with \( \hat{\sigma} \) being the sample variance of \( Y \). Following similar experimental work, in the simulations that follow we consider \( 0 \le \rho \le .25 \) \citep{semenova_path_2023}. In the regression setting, \( \rho \) is multiplied by \( 4 \) so that the noise level of each \( \epsilon_i \) varies from \( 0 \) to \( \hat{\sigma} \). In the classification and regression cases, we let \( \ell \) correspond to the log loss and squared error loss, respectively. Figure \ref{fig:additive-noise} shows the results for the \( \phi^{Holdout}_{CLT} \) and \( \phi^{CV}_{CLT} \) methods.  The results validate previous works as we do indeed see a substantial loss in power once the noise level is sufficiently high.


\begin{figure}[t]
  \includegraphics[width=\textwidth]{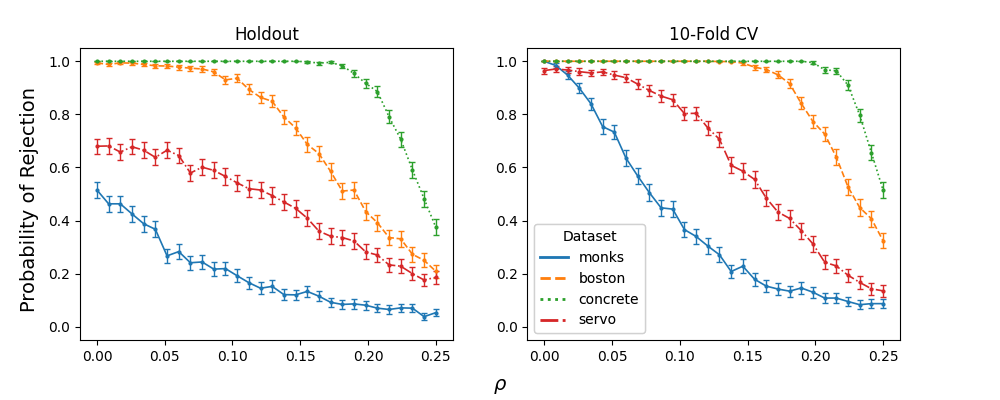}
  \caption{Effects of additive noise on \( \phi_{CLT} \) at nominal level \( \alpha = 0.1 \) estimated across 1,000 simulations. The left and right plots correspond to the estimated probability that \( \phi_{CLT}^{Holdout} \left ( f_{1 } \right ) =1\) and \( \phi_{CLT}^{CV} \left ( f_1 \right ) =1\), respectively.  The solid, dashed, dotted, and solid/dotted lines correspond to the datasets monks, boston, concrete, and servo, respectively, which are discussed further in Section \ref{sec:real-world-data}.}
  \label{fig:additive-noise}
\end{figure}

\subsection{Real-World Data}
\label{sec:real-world-data}

In this section, \( \phi^{CV}_{CLT} \) is applied to a diverse collection of real-world datasets. The goal is to test the same scenario as that of Section \ref{sec:mcs-additive-noise}. Namely, if \( H_{0,1}: 1 \in \Theta \) is true when \( \mc_1 \) and \( \mc_2 \) represent classes of additive linear models and random forests, respectively. Table \ref{tab:rl-data-tests} displays the results of the MCS test on each dataset.

Table \ref{tab:rl-data-tests} shows that when applied to real-world data, the \( \phi^{CV}_{CLT} \) test produces a range of results on both regression and classification datasets. In Figure \ref{fig:rl-cv-results} of the Appendix, the cross-validation errors of models selected using \( f_1 \) and \( f_2 \) are compared. Notably, the corresponding statistic in Table \ref{tab:rl-data-tests} is large when the random forest cross-validation errors are on average much larger than that of the linear models. Table \ref{tab:rl-data-tests} and Figure \ref{fig:rl-cv-results} also demonstrate that these results do not appear to follow a standard pattern nor obviously correlate with dataset size or dimension.

\begin{table}[p]
\centering
\begin{tabular}{l l c c c l}
\hline
Dataset & Task & n & p & \( \bar{R}_n \left ( f_1, f_{s_1} \right ) \) & Decision \\
\hline
Adult Income \citep{barry_becker_adult_1996} & C & 30162 & 96 & -4.72 & Fail \\
German Credit \citep{hofmann_statlog_1994}& C & 1000 & 49 & 1.35 & Reject \\
COMPAS \citep{larson_how_2016} & C & 6172 & 9 & -7.87 & Fail \\
Telco \citep{cognos_analytics_telco_2018} & C & 7043 & 8 & -12.16 & Fail \\
Monks \citep{wnek_monks_1993} & C & 124 & 12 & 4.71 & Reject \\
Boston Housing \citep{harrison_hedonic_1978} & R & 506 & 13 & 6.06 & Reject \\
Auto \citep{quinlan_auto_1993} & R & 392 & 9 & 5.22 & Reject \\
Bike Sharing \citep{fanaee-t_event_2014} & R & 731 & 12 & -0.17 & Fail \\
Abalone Age \citep{waugh_extending_1995}& R & 4177 & 10 & 1.64 & Reject \\
Concrete Strength \citep{yeh_modeling_1998}& R & 1030 & 9 & 14.92 & Reject \\
CPU \citep{ein-dor_attributes_1987} & R & 209 & 36 & 14.00 & Reject \\
CSM  \citep{ahmed_using_2015}& R & 187 & 24 & 3.84 & Reject \\
Facebook Metrics \citep{moro_predicting_2016}& R & 499 & 15 & -3.65 & Fail \\
Parkinsons \citep{athanasios_tsanas_parkinsons_2009} & R & 5875 & 21 & -20.96 & Fail \\
Servo System \citep{quinlan_combining_1993}& R & 167 & 11 & 3.43 & Reject \\
Melting Point \citep{bergstrom_molecular_2003}& R & 274 & 1143 & 5.86 & Reject \\
\hline
\end{tabular}
\caption{ Overview of results of \( \phi^{CV}_{CLT} \) when applied to a collection real-world regression and classification datasets. In the task column, C and R stand for classification and regression, respectively. The columns n, p, and \( \bar{R}_n \left ( f_1, f_{s_1} \right ) \) represent the sample size, number of features, and test statistic.}
\label{tab:rl-data-tests}
\end{table}


\section{Conclusion}
\label{sec:conclusions}


This work introduces a general framework for constructing tests for MCS from simpler methods. The proposed framework incorporates current uniform testing approaches as well as more efficient selective versions. Tests based on studentization and UI are shown to satisfy the requirements of this framework under mild assumptions.  Experimental results provide evidence that the proposed methods appropriately control the type I error rates in finite samples across a range of simulated MCS scenarios while also attaining notable degrees of power. Applications on real-world datasets demonstrate that the proposed framework is not only theoretically valid but also practically effective in distinguishing instances where simple models are similarly accurate from those in which more complex ML alternatives perform significantly better.


It has become widely recognized that the application of models like random forests or neural networks often leads to increased power at the expense of interpretability. Until now, a rigorous method for detecting when such interpretability-performance trade-offs exist has not been proposed. This work introduces a methodology that enables users to formally identify these trade-offs, providing a more comprehensive understanding of the data.
By adjusting the performance rejection threshold \( \epsilon > 0 \) of the proposed tests, users may also define how much the error has to significantly improve for there to exist a valid trade-off.

There are various MCS future directions to explore. While our focus was on point-wise coverage (\ref{eq:pointwise-mcs-coverage}), the proposed methodology could be modified so that \( \hat{\Theta} \) uniformly covers \( \Theta \) with a desired level of confidence. See Section \ref{sec:uniform-coverage} for more discussion on this topic. Additionally, we proposed sufficient conditions for constructing MCS tests based on MSS tests. It remains an open question whether milder or necessary conditions could be identified. Future research could also benefit from exploring and comparing alternative methodologies not examined in this study. Furthermore, conducting theoretical power analyses in simpler settings could provide valuable insights.

\section{Acknowledgements}

This research was supported in part by the \hideable{University of Pittsburgh Center for Research Computing and Data, RRID:SCR\_022735, through the resources provided. Specifically, this work used the H2P cluster, which is supported by NSF award number OAC-2117681}.

\section{Data Availability and Reproducibility}

Data sharing is not applicable to this article as no new data were created in this study. See \hideable{\url{https://github.com/ryanmcecil/model-class-selection}} to reproduce the results of Section \ref{sec:experiments}.













\bibliography{refs.bib}
\newpage

\appendix

\section{Proofs of Main Results}

\subsection{Proof of Proposition \ref{prop:uniform}}

\begin{proof}
  We provide the proof steps without considering the parenthetical limit \( (as) \) because whether \( (as) \) is included or not, the steps are identical. First, assume for some \( r \in [d] \) that  (\ref{eq:mss-test-condition}) and (\ref{eq:rashomon-set}) are satisfied. Then, by definition \[ \sup_{P \in \mathcal{P}_{0,r}} P \left \{ \psi_r = 1 \right \}  = \sup_{P \in \mathcal{P}_{0,r}} P \left \{ \inf_{t \in T_r} \phi \left ( h_t \right )  = 1 \right \} \le  \sup_{P \in \mathcal{P}_{0,r}} P \left \{  \phi \left ( f_r \right ) = 1 \right \}.\]
  Assuming \( P \in \mathcal{P}_{0,r} \), by the assumption of (\ref{eq:rashomon-set}), \( R_n \left ( f_{r} \right ) \le \argmin_{k \in [d]} \inf_{t \in T_k} R \left ( h_t \right ) + \epsilon \) which implies \( r \in \Theta^{MSS}_{\epsilon, n} \). Thus,
  \[   \sup_{P \in \mathcal{P}_{0,r}} P \left \{  \phi \left ( f_r \right ) = 1 \right \} = \sup_{P \in \mathcal{P}_{0,r}} P \left \{  \phi \left ( f_r \right ) = 1 \cap r \in \Theta^{MSS}_{n,\epsilon} \right \} \le \alpha \]
  which completes the first result. Next, assume for all \( r \in [d] \) that (\ref{eq:rashomon-set}) and (\ref{eq:mss-test-condition}) are satisfied.  By the previous result,
  \[   \inf_{P \in \mathcal{P}} \inf_{r \in \Theta \left ( P \right )} P \left \{  r \in \hat{\Theta} \right \}  \ge 1 - \sup_{P \in \mathcal{P}} \sup_{r \in \Theta \left ( P \right )} P \left \{  \psi_r  = 1 \right \} \ge 1 - \alpha.\]

\end{proof}

\subsection{Proof of Proposition \ref{prop:selective}}

\begin{proof}

  First, assume for some \( r \in [d] \) that  (\ref{eq:mss-test-condition}) and (\ref{cond:rashomon-convergence}) are satisfied. Let \( E_{n,r} \) be the event that \(R_n \left ( f_r \right ) \le \inf_{t \in T_r} R \left ( h_t \right ) + \epsilon\). Then, by definition
  \begin{align*}  \limsup_{n \to \infty} \sup_{P \in \mathcal{P}_{0,r}} P \left \{  \psi_r = 1 \right \} &=  \limsup_{n \to \infty} \sup_{P \in \mathcal{P}_{0,r}} P \left \{  \phi \left ( f_r \right ) = 1 \right \} \\
  &=  \limsup_{n \to \infty} \sup_{P \in \mathcal{P}_{0,r}} \left [ P \left \{ \phi \left ( f_r \right ) = 1 \cap E_{n,r} \right \} + P \left \{ \phi \left ( f_r \right ) = 1 \cap E^{C}_{n,r} \right \}\right ].  \end{align*}
By (\ref{cond:rashomon-convergence}), \( P \left \{ E_{n,r}^C \right \}  \rightarrow 0 \) as \( n \rightarrow \infty \). This implies that \( \sup_{P \in \mathcal{P}_{0,r}} P \left \{ \phi \left ( f_r \right ) = 1 \cap E^{C}_{n,r} \right \} \rightarrow 0 \) as \( n \rightarrow \infty \). Furthermore, assuming \( P \in \mathcal{P}_{0,r} \) and \( E_{n,r} \) is true, then \( R_n \left ( f_{r} \right ) \le \argmin_{k \in [d]} \inf_{t \in T_k} R \left ( h_t \right ) + \epsilon \). Thus, in such a case, \( r \in \Theta^{MSS}_{n,\epsilon} \). This implies that
\begin{align*}  \limsup_{n \to \infty} \sup_{P \in \mathcal{P}_{0,r}}  P \left \{ \phi \left ( f_r \right ) = 1 \cap E_{n,r} \right \} &= \limsup_{n \to \infty} \sup_{P \in \mathcal{P}_{0,r}}  P \left \{ \phi \left ( f_r \right ) = 1 \cap E_{n,r} \cap r \in \Theta^{MSS}_{n,\epsilon} \right \} \\
&\le \limsup_{n \to \infty} \sup_{P \in \mathcal{P}_{0,r}}  P \left \{ \phi \left ( f_r \right ) = 1 \cap r \in \Theta^{MSS}_{n,\epsilon}\right \} \\
&\le \alpha\end{align*}
which completes the first result. The point-wise coverage result follows in the same manner as described in the Proof of Proposition \ref{prop:uniform}.

\end{proof}

  \subsection{Proof of Theorem \ref{thm:sui-holdout}}

  \begin{proof}
     Assume that \( \left ( \nabla_{n,1}(f_r, f_{s_r}, Z_{0})  - \mu_n \right )^{2} / \sigma_n^2  \) is uniformly integrable. Without loss of generality, assume that \( \ndata_{-1} = \left \{ Z_1, \dots, Z_{n_{te}} \right \} \). For ease of notation, let \( X_{n,i} = \nabla_{n,1} \left ( f_r, f_{s_r}, Z_i \right ) \). Then, note that since \( k_n = 1 \),
  \[ \frac{(n_{te} - 1) \bar{\sigma}^2_n \left ( f_r, f_{s_r} \right )}{n_{te}}  = \frac{1}{n_{te}} \sum_{i = 1}^{n_{te}} \left [ \left ( X_{n,i} - \mu_n \right )^{2} - ( \bar{R}_n(f_r, f_{s_r} ) - \mu_n)^{2} \right ].   \]
 Let \( Y_{n,i} =  \frac{ X_{n, i} - \mu_n }{\sigma_n} \). By the uniform integrability assumption,
  \[  \lim_{\lambda \to \infty} \limsup_{n \to \infty} \mathbb{E} \left [ Y^{2}_{n,i} I \left \{ Y^{2}_{n,i} > \lambda \right \} \right ] = 0,  \]
 which implies 
 \[  \lim_{\lambda \to \infty} \limsup_{n \to \infty} \mathbb{E} \left [ |Y_{n,i}| I \left \{ |Y_{n,i}| > \lambda \right \} \right ] = 0.  \]
 Thus, by \citep[Lemma 11.4.2]{lehmann_testing_2010},
 \[ \frac{\bar{R}_n(f_r, f_{s_r}) - \mu_n}{\sigma_n} = \frac{1}{n_{te}} \sum_{i = 1}^{n_{te}} Y_{n,i} \overset{p}{\to} 0. \]
By Lemma 11.4.3 of the same work,
 \[\frac{ \frac{1}{n_{te}} \sum_{i = 1}^{n_{te}} \left (X_{n,i} - \mu_{n} \right )^{2}}{\sigma_n^{2}} =   \frac{1}{n_{te}} \sum_{i = 1}^{n_{te}} Y^{2}_{n,i} \overset{p}{\to} 1.  \]
 It then follows by the continuous mapping theorem that
 \[ \frac{\bar{\sigma}_n(f_r, f_{s_r})}{\sigma_{n}} \overset{p}{\rightarrow} 1.  \]
 Furthermore,
 \[ \frac{ \sqrt{n_{te}} \left [ \bar{R}_n(f_r, f_{s_r}) - \mu_n \right ]}{ \sigma_n  } \overset{d}{\to} N(0,1)   \]
 by \citep[Lemma 11.4.1]{lehmann_testing_2010}. Note that when \( r \in \Theta^{MSS}_{\epsilon,n} \), \( \mu_n - \epsilon \le 0 \). Thus, for any \( P \in \mathcal{P} \), by Slutsky's Lemma and Polya's Theorem,
 \begin{align*}
   \limsup_{n \rightarrow \infty} P \{  \phi_{CLT}  \left ( f_r \right ) = 1  \;\cap\; &E_{r, \epsilon, n}  \} = \limsup_{n \to \infty} P \left \{   \frac{\sqrt{n_{te}}}{\bar{\sigma}_n \left ( f_r, f_{s_r} \right )} \bar{R}_n \left ( f_r, f_{s_r} \right ) > \delta \left ( \alpha \right ) + \epsilon \cap  E_{r, \epsilon, n} \right \} \\
                                                                   &\le \limsup_{n \to \infty} P \left \{   \frac{\sqrt{n_{te}}}{\bar{\sigma}_n \left ( f_r, f_{s_r} \right )} \left (\bar{R}_n \left ( f_r, f_{s_r} \right ) - \mu_n \right ) > \delta \left ( \alpha \right ) \cap  E_{r, \epsilon, n} \right \} \\
                                                                   &\le \limsup_{n \to \infty} P \left \{   \frac{\sqrt{n_{te}}}{\bar{\sigma}_n \left ( f_r, f_{s_r} \right )} \left (\bar{R}_n \left ( f_r, f_{s_r} \right ) - \mu_n \right ) > \delta \left ( \alpha \right ) \right \} \\
    &= \limsup_{n \to \infty} P \left \{   \frac{\sqrt{n_{te}}}{\sigma_n \left ( f_r, f_{s_r} \right )} \left (\bar{R}_n \left ( f_r, f_{s_r} \right ) - \mu_n \right ) > \delta \left ( \alpha \right ) \right \} \\
        &\le \alpha
 \end{align*}
 where \( \delta \left ( \alpha \right ) = \Phi^{-1} ( 1- \alpha) \) and \( E_{r, \epsilon, n} \) denotes the event that \( r \in \Theta^{MSS}_{\epsilon, n} \).
  \end{proof}

\subsection{Proof of Theorem \ref{thm:sui-cv}}

\begin{proof}
Assume the sequence \( \left ( \bar{\nabla}_n(Z_0) - \bar{\mu}_n \right )^{2} / \sigma_n^2  \) is uniformly integrable and \( \gamma_{n_{tr}}^{loss}(g)  = o \left ( \sigma^2_n / n \right ) \). Let \( \mu_n = R_n \left ( f_r \right ) - R_n \left ( f_{s_r} \right ) \). By \citep[Theorems 2 and 4]{bayle_cross-validation_2020},
\[ \frac{\bar{\sigma}_n \left ( f_r, f_{s_r} \right )}{\sigma_n} \overset{p}{\to} 1 \text{ and }  \frac{\sqrt{n}}{\sigma_n} \left ( \bar{R}_n \left ( f_r, f_{s_r} \right ) - \mu_{n} \right ) \overset{d}{\to} N(0,1).\]
Therefore, by the same steps as the end of the proof of Theorem \ref{thm:sui-holdout}, \( \limsup_{n \rightarrow \infty} P \{  \phi_{CLT} \left ( f_r \right ) = 1  \cap r \in \Theta^{MSS}_{n,\epsilon} \}  \le \alpha \) for any \( P \in \mathcal{P} \).
\end{proof}

  \subsection{Proof of Theorem \ref{thm:ui}}

  \begin{proof}
    By the law of total expectation and the fact that each \( Z \in \ndata \) are i.i.d.,
    \begin{align*}
     \mathbb{E} \left [ \bar{R}_n^{\exp} \left ( f_r, f_{s_r}, \omega \right ) \right ] &= k_n^{-1} \sum_{j=1}^{k_n} \mathbb{E} \left [ \exp \left \{  \omega \sum_{Z \in \ndata_{-j}} \nabla_{n,j} \left ( f_r, f_{s_r}, Z \right ) \right \}  \right ] \\
      &= k_n^{-1} \sum_{j=1}^{k_n}  \mathbb{E} \left [ \prod_{Z \in \ndata_{-j}} \mathbb{E}_{Z} \left [ \exp \left \{  \omega \nabla_{n,j} \left ( f_r, f_{s_r}, Z \right ) \right \}  \right ] \right ].
   \end{align*}
   By the strong central condition, \( \mathbb{E}_{Z} \left [ \exp \left \{  \omega \nabla_{n,j} \left ( f_r, f_{s_r}, Z \right ) \right \} \right ] \le 1 \) almost surely for any \( j \in [k_n] \) and \( Z \in \ndata_{-j} \). Thus, \( \mathbb{E} \left [ \bar{R}_n^{\exp} \left ( f_r, f_{s_r}, \omega \right ) \right ] \le 1 \). By Markov's Inequality, for any \( P \in \mathcal{P} \),
\begin{align*}
  P \left ( \phi_{UI} \left ( f_r \right ) =  1 \cap  r \in \Theta^{MSS}_{\epsilon, n} \right ) &\le P \left (\bar{R}_n^{\exp} \left ( f_r, f_{s_r}, \omega \right ) > \alpha^{-1}  \cap  r \in \Theta^{MSS}_{\epsilon, n} \right ) \\
  &\le P \left (\bar{R}_n^{\exp} \left ( f_r, f_{s_r}, \omega \right ) > \alpha^{-1} \right ) \\
  &\le \alpha \mathbb{E} \left [  \bar{R}_n^{\exp} \left ( f_r, f_{s_r}, \omega \right )  \right ] \\
      &\le \alpha.
\end{align*}
\end{proof}
  \section{Further Discussion and Results}
  
\subsection{A Conservative Version of \( \phi_{CLT} \)}
\label{sec:conservative-studentization}

\begin{figure}[hbt!]
 \center
 \includegraphics[width = \textwidth]{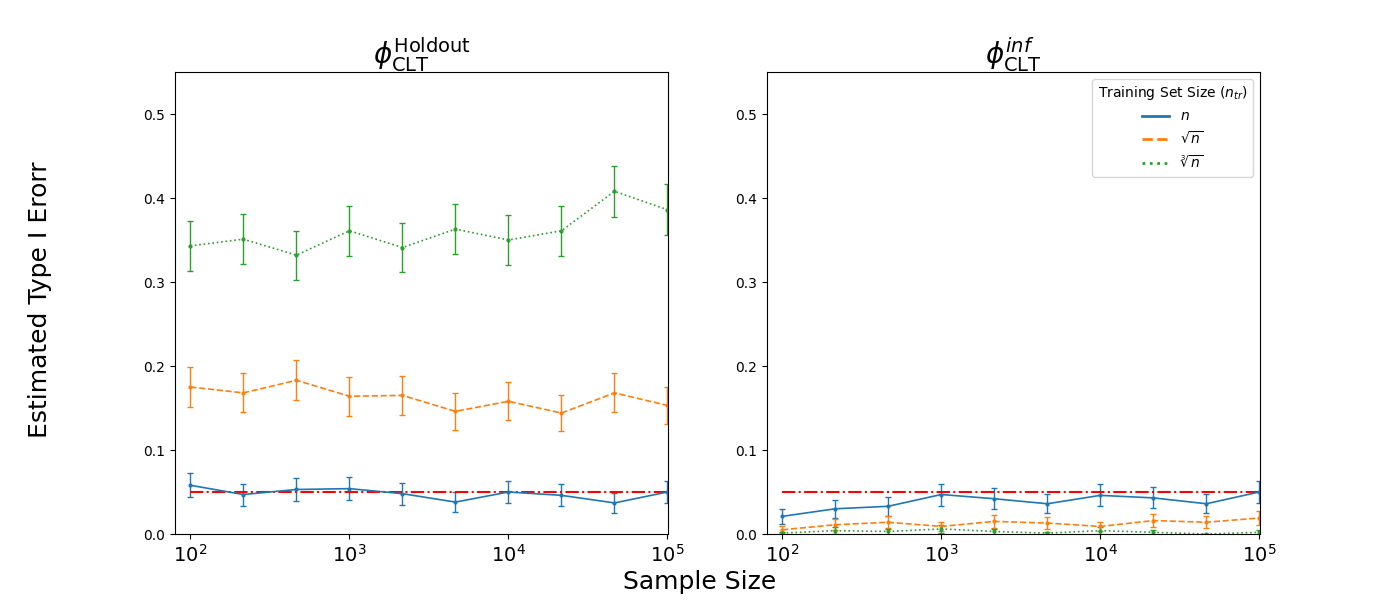}
 \caption{Example of a simulated regression scenario where \( \phi_{CLT}^{Holdout} \) fails to control the type I error while \( \phi^{inf}_{CLT} \) does. For \( \phi^{inf}_{CLT} \), we set \( k_n = 1 \) and use the same holdout split as \( \phi_{CLT}^{Holdout} \). Let \( Z_i = (Y_i, X_i) \in \mathbb{R} \times \mathbb{R}^3\) where \( Y_i = X_i \beta + \epsilon_i\), \( X_i \iid N(0_{3}, I_3)\), \( \beta = (1, 1, 1)^{\top} \), and \( \epsilon_i \iid N(0,1) \). \( \mc_{1} = \left \{ h_{t} \; : \; t \in \mathbb{R}^3, t_{2} = 0 \right \}  \) and \( \mc_{2} = \left \{ h_{t} \; : \; t \in \mathbb{R}^3, t_{3} = 0 \right \}  \) correspond to classes of additive linear models where the second and third features are not used, respectively. The loss, \( \ell \), is the squared error loss: \( \ell \left (h_{\theta}, Z_{i} \right ) = \left ( X_{i} \theta - Y_i \right )^2 \). The left and right plots correspond to using the tests \( \phi_{CLT}^{Holdout} \) and \( \phi_{CLT}^{\inf} \) to test \( H_{0,1}: 1 \in \Theta \). Note by definition that \( H_{0,1} \) is true. We set \( f_{1} \) and \( f_2 \) to output the OLS esimates for \( \mc_1 \) and \( \mc_2 \), \( s_1 = 2 \), and \( \epsilon = 1\times10^{-6} \). The solid, dashed, and dotted lines correspond to the estimated type I error across 1,000 simulations for training set sizes \( n/2 \), \( \sqrt{n} \), and \( \sqrt[3]{n} \), respectively. The dashed/dotted horizontal line depicts the significance level \( \alpha = .05 \).}
 \label{fig:sui-convergence-example}
\end{figure}

In Section \ref{sec:studentization}, the studentized test \( \phi_{CLT} \) was proposed. Recall that for \( \phi_{CLT} \) to be combined with Proposition \ref{prop:selective} to conduct valid MCS, we require condition (\ref{cond:rashomon-convergence}). There, however, exist some settings in which  \( R_n \left (f_r  \right )- \inf_{t \in T_r} R \left ( h_t \right ) - \epsilon \le 0 \) is only likely to occur in extremely large sample sizes. A simple example of such an atypical setting is when the training set size \( n_{tr} \) is much smaller than the test set size \( n_{te} \). Figure \ref{fig:sui-convergence-example} showcases a simulated regression scenario where the MCS test \( \phi^{Holdout}_{CLT} \) as defined in Section \ref{sec:experiments}, fails to appropriately control the type I error even when the sample size is very large.

Define the test
\begin{equation}\label{eq:suiinf}  \phi^{inf}_{CLT} \left ( f_r\right ) = I \left \{ \inf_{h \in \mc_{r}} \bar{R}_n \left (f_{h}, f_{s_r} \right ) > k_n^{-1/2} n_{te}^{-1/2} \bar{\sigma}_n \left ( f_r, f_{s_r} \right ) \left ( \Phi^{-1} \left ( 1 - \alpha \right ) + \epsilon \right )  \right \} .  \end{equation}
Note that \( \inf_{h \in \mc_r} \bar{R}_n \left (f_{h}, f_{s_r} \right ) \) is a conservative estimate for \( \bar{R}_n \left (f_{r}, f_{s_r} \right ) \) when \( k_n = 1 \). The next result shows that \( \phi_{CLT}^{inf} \) satisfies similar asymptotic properties to \( \phi_{CLT} \). 

\begin{corollary}[Validity of \( \phi^{inf}_{CLT} \) when \( k_n = 1 \)]
  \label{cor:suinf}
  Under the assumptions of Theorem \ref{thm:sui-holdout}, the test (\ref{eq:suiinf}) 
satisfies the asymptotic version of (\ref{eq:mss-test-condition}).
\end{corollary}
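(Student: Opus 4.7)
The plan is to reduce Corollary \ref{cor:suinf} to Theorem \ref{thm:sui-holdout} by a direct sandwiching argument. The key observation is that since $f_r \in \ac(\mc_r)$, the fitted model $f_r(\ndata_1)$ lies in $\mc_r$ and is therefore a feasible point for the infimum appearing in $\phi^{inf}_{CLT}(f_r)$.

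First I would verify the identity $\bar{R}_n(f_{f_r(\ndata_1)}, f_{s_r}) = \bar{R}_n(f_r, f_{s_r})$, which in the $k_n = 1$ case follows immediately from the definitions: the constant algorithm $f_h$ with $h = f_r(\ndata_1)$ returns exactly $f_r(\ndata_1)$ when called on $\ndata_1$, so the summands $\ell(f_h(\ndata_1), Z) - \ell(f_{s_r}(\ndata_1), Z)$ and $\ell(f_r(\ndata_1), Z) - \ell(f_{s_r}(\ndata_1), Z)$ agree for every $Z \in \ndata_{-1}$. Taking this particular $h$ in the infimum then yields $\inf_{h \in \mc_r} \bar{R}_n(f_h, f_{s_r}) \leq \bar{R}_n(f_r, f_{s_r})$. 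Since $\phi^{inf}_{CLT}$ uses this smaller numerator against the same threshold $k_n^{-1/2} n_{te}^{-1/2} \bar{\sigma}_n(f_r, f_{s_r})(\Phi^{-1}(1-\alpha) + \epsilon)$ as $\phi_{CLT}$, the rejection event of $\phi^{inf}_{CLT}$ is contained in that of $\phi_{CLT}$, i.e.\ $\phi^{inf}_{CLT}(f_r) \leq \phi_{CLT}(f_r)$ pointwise. Intersecting with $\{r \in \Theta^{MSS}_{\epsilon, n}\}$, taking $\sup_{P \in \mathcal{P}}$ followed by $\limsup_{n}$, and applying Theorem \ref{thm:sui-holdout} then delivers the asymptotic version of (\ref{eq:mss-test-condition}) at level $\alpha$.

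There is no serious obstacle here: the hypotheses of Corollary \ref{cor:suinf} are identical to those of Theorem \ref{thm:sui-holdout}, so all of the CLT, Slutsky, and uniform-integrability work is already done, and the infimum enters only through a one-sided bound at a single feasible model, so neither its attainment nor the measurability of $\inf_{h \in \mc_r} \bar{R}_n(f_h, f_{s_r})$ needs to be established. The one point worth flagging is the restriction to $k_n = 1$: for $k_n > 1$ the identity above would break because a single constant algorithm $f_{h}$ cannot simultaneously match the varying random outputs $f_r(\ndata_j)$ across folds, so a genuinely different argument would be needed in that regime.
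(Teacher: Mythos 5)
Your proposal is correct and follows essentially the same route as the paper's proof: both rest on the pointwise bound \( \inf_{h \in \mc_r} \bar{R}_n \left ( f_h, f_{s_r} \right ) \le \bar{R}_n \left ( f_r, f_{s_r} \right ) \), the resulting containment of rejection events, and a direct appeal to Theorem \ref{thm:sui-holdout}. Your added justification of the inequality via the feasibility of \( f_r \left ( \ndata_1 \right ) \in \mc_r \), and your remark that the argument is specific to \( k_n = 1 \), are sensible elaborations of what the paper states without comment.
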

\begin{proof}
   Note that \( \inf_{h \in H_r} \bar{R}_n \left (f_{h}, f_{s_r} \right ) \le  \bar{R}_n \left (f_{r}, f_{s_r} \right ) \). Thus, \[ \limsup_{n \to \infty}  P \left \{  \phi^{inf}_{CLT} \left ( f_r \right ) = 1 \cap r \in \Theta_{n,\epsilon}^{MSS} \right \}  \le \limsup_{n \to \infty}  P \left \{  \phi_{CLT} \left ( f_r \right ) = 1 \cap r \in \Theta_{n,\epsilon}^{MSS}\right \}.  \]
  The result follows from Theorem \ref{thm:sui-holdout}.
\end{proof}

In contrast to \( \phi_{CLT} \), the test \( \phi^{inf}_{CLT} \)  appropriately controls the type I error in the situation displayed by Figure \ref{fig:sui-convergence-example}.
A downside of \( \phi^{inf}_{CLT} \) is that \( \inf_{h \in \mc_r} \bar{R}_n \left (f_{h}, f_{s_r} \right)  \) has to be computable which limits its application to only model classes where an empirical risk minimizer is able to be found.

\subsection{Likelihood Ratio Test}
\label{sec:stat-models}

Assume for \( r \in [d] \) that \( \mc_r = \{P_{\theta}: \theta \in \Theta_{r} \} \) is a class of statistical models where each \( P_{\theta} \) for \( \theta \in \Theta_r \subseteq \Theta \subseteq \mathbb{R}^{p}\) denotes a unique distribution parameterized by the \( p \)-dimensional vector \( \theta \). Typically in such a setting, the goal is to correctly identify the parameter \( \theta \in \Theta \) which is most likely to have generated the data \( \left \{ Z_1, \dots, Z_n \right \}  \). With this in mind, let the loss function be the negative log-likelihood function
\[ \ell(P_{\theta}, z) = - \log \left (p_{\theta} \left ( z \right ) \right ) \]
where \( p_{\theta}(z) \) is the likelihood of observating \( z \in \mathcal{Z} \) under the distribution \( P_{\theta} \). Assuming the true data generating distribution \( P_{\theta^{*}} \) exists such that \( \theta^{*} \in \Theta \), then \( \theta^{*} \) will be the minimizer of \( R \left ( P_{\theta}\right ) \) across \( \theta \in \Theta \) \citep{bickel_mathematical_2015}.

Let
\[ L_n(\theta) = \sum_{i=1}^{n} \log \left (p_{\theta}\left (Z_{i} \right ) \right ) = \log \left ( \prod_{i=1}^n p_{\theta}(Z_i) \right )\]
be the empirical error for the parameter \( \theta \in \Theta \). Set \( \hat{\theta} = \arg \max_{\theta \in \Theta} \prod_{i=1}^n p_{\theta} \left ( Z_i \right ) \) to be the maximum likelihood estimate across \( \Theta \). Let \( \chi^2_p (1-\alpha) \) represent the upper \( \alpha \)-quantile of the chi-squared distribution with \( p \) degrees of freedom. Assume \( f_1, \dots, f_d \) represent empirical risk minimization strategies for \( \mc_1, \dots, \mc_d \). Then, the likelihood ratio test
\begin{equation}
  \label{eq:lrt-test}
  \phi \left ( f_{r} \right ) = I \left \{ -2 \left [ L_n \left ( \theta_{r} \right ) - L_n \left ( \hat{\theta} \right ) \right ] > \chi^{2}_{p}(1-\alpha)  \right \} \text{ where } P_{\theta_r} = f_{r} \left ( \ndata \right )
  \end{equation}
satisfies (\ref{eq:mss-test-condition}) as long as a few regularity conditions are satisfied \citep{wilks_large-sample_1938}. See \cite[Section 6.2]{bickel_mathematical_2015} or \cite[Section 3]{wasserman_universal_2020} for a discussion of the necessary regularity conditions which include statistical models being identifiable and differentiable in quadratic mean, a compact parameter space, and the log-likehood being a smooth function for any \( \theta \in \Theta \). In certain statistical settings the likelihood ratio test been shown to be the the uniformly most powerful test \citep{neyman_ix_1997}. Alternatives to the likelihood ratio statistic with similar forms of limiting distributions are the wald statistic and rao score test statistic \citep{bickel_mathematical_2015}.

We are not the first to notice the potential application of the asymptotics of the likelihood ratio test statistic for MCS. A likelihood ratio based methodology has been proposed to conduct valid MCS with collections of appropriately nested model classes \citep{zheng_model_2019,li_model_2019}. Another line of work proposes a similar method for model classes composed of linear mixed models \citep{jiang_fence_2008}. Furthermore, when \( \mc_1, \dots, \mc_d \) are composed of gaussian linear models, the likelihood ratio test is equivalent to conducting an F-test \citep[Proposition 6.1.2]{bickel_mathematical_2015} and provides exact coverage. Similar methodology relying on the F-test applicable to MCS for nested gaussian linear model classes has also been suggested \citep{ferrari_confidence_2015}. Lastly, it should be noted that the original UI methodology was inspired by the likelihood ratio test \citep{wasserman_universal_2020}.

\subsection{Concentration Inequality based Approaches}
\label{sec:ci}

For simplicity, let \( \ndata_0 \subseteq \ndata \) and assume in the notation of Section \ref{sec:proposed-methods} that \( k_n = 1 \). In the proof of Theorem \ref{thm:ui}, the strong central type condition, (\ref{eq:strong-central-condition}), was combined with Markov's inequality to construct a concentration inequality similar to
\begin{equation} \label{eq:ci} \mathbb{P} \left \{ \sum_{Z \in \ndata_0} \nabla_{n,1} \left ( f_r, f_{s_{r}}, Z \right) - \left [ R_n \left (f_r \right ) - R_n \left (f_{s_{r}} \right ) \right ] \le \delta_n(\alpha) \right \} \ge 1 - \alpha    \end{equation}
for some arbitrary choice of \( \delta_n \left ( \alpha \right ) \). A result like that of Theorem \ref{thm:ui} holds for this general inequality as well. Define the test
\begin{equation} \label{eq:concentration-inequality} \phi_{CI} \left ( f_r \right ) = I \left \{ \sum_{Z \in \ndata_0} \nabla_{n,1} \left ( f_r, f_{s_{r}}, Z \right )   > \delta_n \left ( \alpha \right ) + \epsilon \right \}.  \end{equation} Then, under the assumption of (\ref{eq:ci}),
\begin{align*}
  P \bigg \{ \phi_{CI} \left ( f_r \right ) = 1 \;\cap\; &r \in \Theta^{MSS}_{\epsilon, n} \bigg \} = P \left \{   \sum_{Z \in \ndata_0} \nabla_{n,1} \left ( f_r, f_{s_{r}}, Z \right )   > \delta_n \left ( \alpha \right ) + \epsilon \cap r \in \Theta^{MSS}_{\epsilon, n} \right \}  \\
 &\overset{(a)}{\le}  P \left \{   \sum_{Z \in \ndata_0} \nabla_{n,1} \left ( f_r, f_{s_{r}}, Z \right )  - \left [ R_n \left (f_r \right ) - R_n \left (f_{s_{r}} \right ) \right ]   > \delta_n \left ( \alpha \right ) \cap r \in \Theta^{MSS}_{\epsilon, n} \right \} \\
  &\le P \left \{   \sum_{Z \in \ndata_0} \nabla_{n,1} \left ( f_r, f_{s_{r}}, Z \right )  - \left [ R_n \left (f_r \right ) - R_n \left (f_{s_{r}} \right ) \right ]   > \delta_n \left ( \alpha \right ) \right \} \\
  &\le \alpha
\end{align*}
where \( (a) \) follows from \( r \in \Theta^{MSS}_{\epsilon, n} \) which implies that \( R_n \left (f_r \right ) - R_n \left ( f_{s_r} \right ) - \epsilon \le 0 \).

There are many methods for deriving concentration inequalities in the form of (\ref{eq:ci}) \citep{boucheron_concentration_2013}. A recent work on MSS showcases a valid construction of (\ref{eq:ci}) using a one-sided empirical Bernstein inequality \citep{takatsu_bridging_2025}. When \( \ndata_0  \) and \( \ndata_1 \) form a partition of \( \ndata\), popular alternative approaches to creating bounds in the form (\ref{eq:ci}) often utilize exponential moment inequalities. For instance, one could assume that \( \nabla_{n,1} \left ( f_r, f_{s_{r}}, Z_0 \right ) - \left [ R_n \left (f_r \right ) - R_n \left ( f_{s_r} \right ) \right ] \) is sub-gaussian, sub-exponential, or sub-weibull. Notably, the sub-gaussian assumption holds whenever \(\nabla_{n,1} \left ( f_r, f_{s_{r}}, Z_0 \right ) \) is almost surely bounded.
 
PAC bounds are a popular concept in statistical learning theory that provide probabilistic guarantees on a learning algorithm's performance. The PAC framework also allows for the computation of data-dependent, uniform bounds of the form of \eqref{eq:ci} \citep{valiant_theory_1984}. In general, if the model class \( \mc_{r}' \subseteq \mc \) where \( f_r, f_{s_r} \in \ac \left ( \mc_{r}' \right ) \) is limited in complexity (as measured by e.g. its cardinality, rademacher complexity, fat shattering dimension, covering number, or VC dimension) and the loss function is light-tailed, then a bound may be derived through an appropriate choice of concentration inequalities \citep{vapnik_nature_2000,mohri_foundations_2018,fisher_all_2019}. Typical assumptions to ensure that the loss is light-tailed are to either assume it is bounded or satisfies an exponential moment inequality \citep{chafai_interactions_2012,boucheron_concentration_2013}. In cases where \( \mc_{r}' \) is very complex (which is typical of ML algorithms), the limiting assumption on the complexity of \( \mc_{r}' \) may be replaced by stability assumptions on \( f_r \) and \( f_{s_r} \) \citep{elisseeff_stability_2005,shalev-shwartz_learnability_2010,feldman_generalization_2018}. Similar to the assumptions of Theorem \ref{thm:sui-cv}, these stability assumptions typically force the difference between \( \ell \left ( f_r \left ( \ndata \right ), Z_0 \right ) - \ell \left ( f_{s_r} \left ( \ndata \right ), Z_0 \right ) \) and \( \ell \left ( f_r \left ( \ndatai \right ), Z_0 \right ) - \ell \left ( f_{s_r} \left ( \ndatai \right ), Z_0 \right ) \) to decrease towards \( 0 \) in probability at a rate dependent on \( n \). Recently, non-vacuous bounds for complex machine learning models such as neural networks have been achieved through PAC bayesian approaches that require suitable choices of data dependent priors \citep{dziugaite_computing_2017}. Current work on PAC bounds focus on developing tighter bounds with milder assumptions. Although noteworthy, PAC bounds are typically not very tight and yield conservative tests.

Another line of work has shown that if the loss is bounded and \( \delta_n \left ( \alpha \right ) \) is chosen dependent on the complexity of \( \mc_{r}' \) where \( f_r, f_{s_{r}} \in \mathcal{F} \left ( \mc_{r}' \right ) \), then a bound of the form of (\ref{eq:ci}) can be achieved \citep[Lemma 23]{fisher_all_2019}. Follow up works have constructed procedures that would allow for computing \( \inf_{t \in T_r} \phi_{CI} \left ( h_{t} \right ) \) in such cases. This would permit an application of Proposition \ref{prop:uniform} to construct a valid MCS test. These methods are viable when \( \mc_{r}' \) is composed of simpler model types such as (regularized) linear models, linear models in a reproducing kernel Hilbert space, rule lists, sparse decision trees, kernel ridge regression models, random forests with a fixed set of pre-trained trees, and sparse generalized additive models \citep{fisher_all_2019, hara_approximate_2018, mata_computing_2022, xin_exploring_2022,laberge_partial_2023,zhong_exploring_2023}.

\subsection{Uniform Coverage}
\label{sec:uniform-coverage}
  Thus far, we have only discussed test construction strategies that yield a confidence set \( \hat{\Theta} \) with point-wise coverage guarantees (\ref{eq:pointwise-mcs-coverage}).  We wish to point out that attaining stronger uniform coverage guarantees of the form
 \begin{equation} \label{eq:uniform-mcs-coverage} \left ( \overset{(as)}{\liminf_{n \to \infty}} \right )  \inf_{P \in \mathcal{P}}  P \left (  \Theta \subseteq \hat{\Theta} \right )  \ge 1-\alpha\end{equation}
 would require controlling the family-wise error rate of the tests \( \psi_r \) for \( r \in [d] \). The simplest approach would be to utilize methods from the multiple comparisons literature. For example, setting the significance level to \( \alpha/d \) would appropriately result in uniform coverage by the union bound if (\ref{eq:test-condition}) is satisfied. As another option, the Benjamini-Hochberg method is a multiple comparisons procedure that fixes some  potential issues of the union bound. Notably, it has already been applied in MSS settings \citep{dey_multiple_2026}.

Alternatively, it has been shown that dimension agnostic uniform coverage is possible in some settings by including a data-dependent screening step \citep{kim_locally_2025}. Such methodology would require constructing an MCS confidence set from a subset of the data, then using the size of the confidence set to estimate an appropriate signifiance level at which to run the MCS procedure on the full sample. These methods, however, are beyond the scope of this paper, and we leave further investigation of the potential of these approaches to future work.

\newpage
\subsection{Box Plots for Real Data Experiments}
\begin{figure}[h!]
  \includegraphics[width=.95\textwidth]{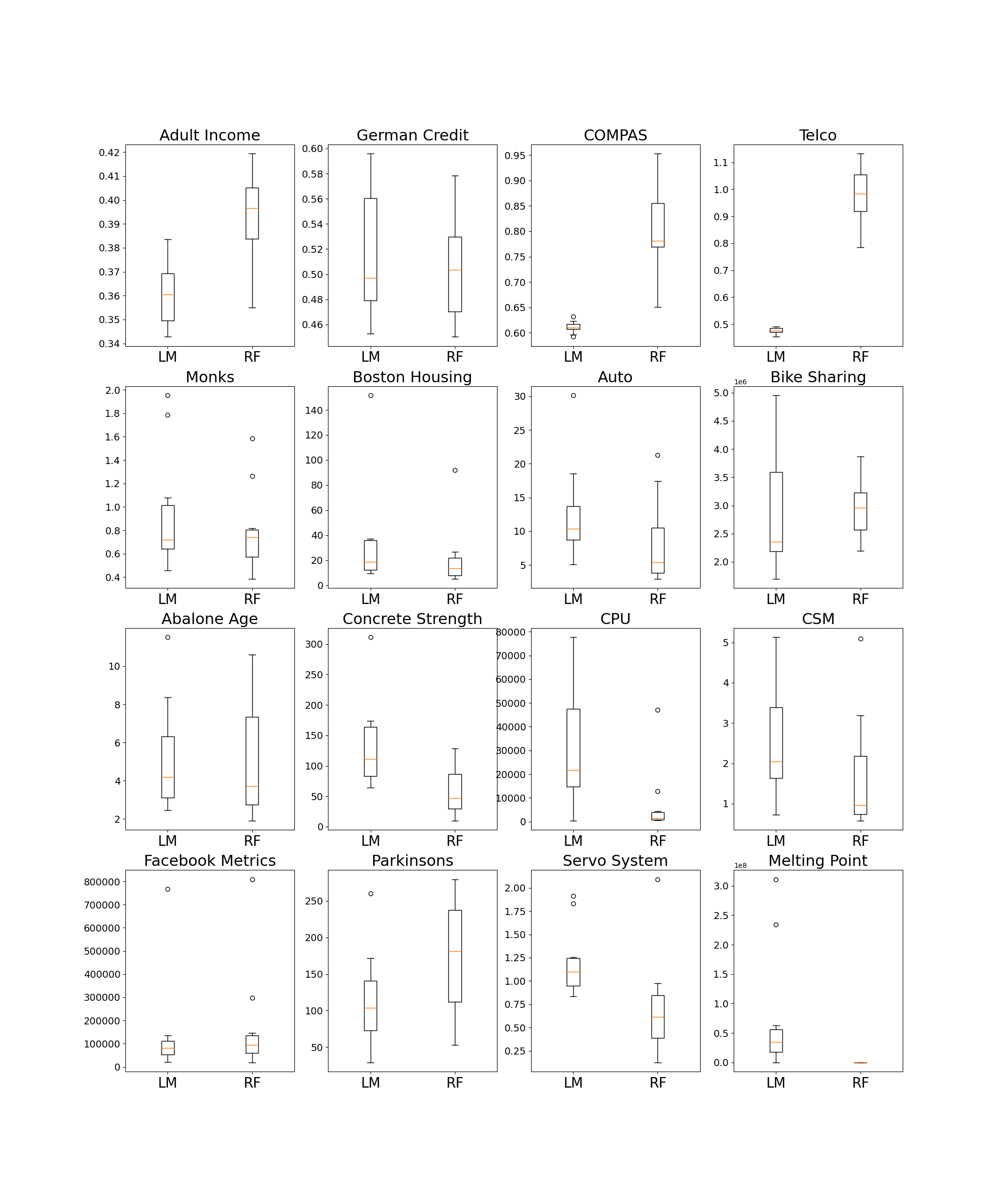}\caption{Box plots representing the 10-fold cross-validation errors for \( f_1 \) and \( f_2 \) that are used in the computation of the test statistic \( \bar{R}_n \left ( f_1, f_{s_1} \right ) \) when using the test \( \phi^{CV}_{CLT} \). LM and RF correspond to classes \( \mc_1 \) and \( \mc_2 \), respectively.}
  \label{fig:rl-cv-results}
\end{figure}


\end{document}